  \providecommand\BibTeX{{%
    \normalfont B\kern-0.5em{\scshape i\kern-0.25em b}\kern-0.8em\TeX}}}
\begin{document}

\title{Collaborative Data Acquisition}  


\author{Wen Zhang}
\affiliation{%
  \institution{ShanghaiTech University}
  \city{Shanghai} 
  \country{China}
}
\email{zhangwen@shanghaitech.edu.cn}
\author{Yao Zhang}
\affiliation{%
  \institution{ShanghaiTech University}
  \city{Shanghai} 
  \country{China}
}
\email{zhangyao1@shanghaitech.edu.cn}
\author{Dengji Zhao}
\affiliation{%
	\institution{ShanghaiTech University}
	\city{Shanghai} 
	\country{China}
}
\email{zhaodj@shanghaitech.edu.cn}

\begin{abstract}
  We consider a requester who acquires a set of data (e.g. images) that is not owned by one party. In order to collect as many data as possible, crowdsourcing mechanisms have been widely used to seek help from the crowd. However, existing mechanisms rely on third-party platforms, and the workers from these platforms are not necessarily helpful and redundant data are also not properly handled. To combat this problem, we propose a novel crowdsourcing mechanism based on social networks, where the rewards of the workers are calculated by information entropy and a modified Shapley value. This mechanism incentivizes the workers from the network to not only provide all data they have but also further invite their neighbours to offer more data. Eventually, the mechanism is able to acquire all data from all workers on the network and the requester's cost is no more than the value of the data acquired. The experiments show that our mechanism outperforms traditional crowdsourcing mechanisms.
\end{abstract}

%

\keywords{mechanism design, crowdsourcing, data acquisition, social networks} 

\maketitle


\section{Introduction}
\label{section:intro}
Recent years witnessed the rise and development of deep learning~\cite{abelson-et-al:lecun2015deep}. Many laboratories and companies put emphasis on building neural network applications such as DeepMind, Facebook AI Research (FAIR) and Stanford AI Lab (SAIL). In these applications, large-scale datasets are indispensable. Therefore, data acquisition underpins the success of these applications. Traditionally, they may hire voluntaries to collect data such as photos or voices, which is a very time-consuming and labour-intensive process.

Crowdsourcing is a teamwork collaboration mode in which companies use the open call format to attract potential workers to do the task at a lower cost, which was first proposed by Howe~\cite{gottlob:howe2006rise}. Many companies are committed to crowdsourcing services such as Amazon Mechanical Turk and gengo AI. Consequently, more and more research teams turn to these platforms to acquire data. For example, ImageNet~\cite{abelson-et-al:deng2009imagenet} from SAIL is collected via Mechanical Turk.

In traditional crowdsourcing models, the requester has to pay not only the data providers but also the third-party crowdsourcing platforms. However, the data collected in this way may be redundant, but the requester still has to pay for it. Therefore, whether the requester can benefit from the paid crowdsourcing platforms is not clear.

In this paper, we propose a novel crowdsourcing mechanism for data acquisition via social networks. The requester is the owner of the mechanism and she can use it to collect data without any third-party platforms. The mechanism requires the requester to release the task information to her neighbours on the network. Under this mechanism, the participants will be incentivized to provide all their data and invite all their neighbours to do the task. They will gain payoffs not only from their offered data but also from inviting their neighbours. By doing so, the task information can be disseminated through the whole social network without paying the workers in advance.

Different from other crowdsourcing mechanisms, our mechanism only distributes rewards to those who provide non-redundant data and do effective diffusion. That is, the workers will not gain any payoff if they do not contribute to the data acquisition task. Hence it can eliminate redundant and irrelevant data, and avoid unnecessary expenses for the requester. More importantly, our mechanism can incentivize workers participated to invite all their neighbours to join the task, which is not possible under existing mechanisms. 

In the crowdsourcing literature, there are many related mechanisms published. Franklin \textit{et al.} focused on how to use crowdsourcing to process difficult queries~\cite{abelson-et-al:franklin2011crowddb}. Chawla \textit{et al.} proposed an optimal crowdsourcing contest for high-quality submissions~\cite{abelson-et-al:chawla2015optimal}. Zhou \textit{et al.} studied a new method of measurement principle for work quality~\cite{abelson-et-al:zhou2015regularized}. Miller \textit{et al.} devised a scoring system to evaluate the feedback elicited~\cite{DBLP:journals/mansci/MillerRZ05}. Radanovic\textit{et al.} presented a general mechanism to reward the workers according to peer consistency~\cite{DBLP:journals/tist/RadanovicFJ16}. They are all different from our work. They mainly focused on the crowdsourcing model to improve the quality of the work provided by the workers and their settings have not considered the task propagation between workers. In our setting, we also incentivize the workers to propagate the task information to their neighbours to collect more data. Naroditskiy~\textit{et al.}~\cite{naroditskiy2012verification} initiated a formal study of verification in crowdsourcing settings where information is propagated through referrals. However, there is often a single ground truth in their settings which is unknown to the requester. Our setting is not seeking the answer for a ground truth, and we are aiming for collecting rich data. 

There also exists some interesting literature about information diffusion on social networks. Narayanam and Narahari studied the target set selection problem~\cite{abelson-et-al:Narayanam2010}, which involves discovering a small subset of influential workers in a given social network, to maximize the diffusion quality of the workers rather than incentivizing them to diffuse. In terms of incentivizing people to disseminate the task information, Li~\textit{et al.} proposed a single-item auction diffusion mechanism via social networks and Zhao~\textit{et al.} then generalized the mechanism for multiple items~\cite{DBLP:conf/aaai/LiHZZ17,DBLP:conf/atal/ZhaoLXHJ18}. The problem they studied is a non-cooperative game, while in our setting the workers may benefit from others' participation. Emek~\textit{et al.} studied the reward mechanisms in multi-level marketing within social networks~\cite{DBLP:conf/sigecom/EmekKTZ11}. However, they focused on the false-name manipulations and in their setting, each agent occurring in the referral tree has to purchase the product, which is not required in our setting. Another related work is the MIT winning solution under the DARPA Network Challenge~\cite{abelson-et-al:pickard2011time}. However, their solution only works for tree structures. Our mechanism refers to their idea and puts forward a modified payoff policy for workers' diffusion contribution in single-source directed acyclic graphs. More importantly, the reward in the DARPA network challenge is predefined, while in our setting it varies according to the data offered by the workers.  

Our mechanism is also closely related to the strategy diffusion mechanism proposed by Shen \textit{et al.}~\cite{abelson-et-al:shen2018multi}. However, they focused on the problem of false-name attacks and did not consider data redundancy. Also, their mechanism cannot guarantee that the workers will diffuse the task information to all their neighbours. Winter~\cite{winter1989value} proposed a coalition structure value for level structures. Their idea is similar to our method of evaluating the data contribution. Nevertheless, their structure does not take the priority of different coalitions in the same level into consideration, which is essential for the diffusion incentive in our setting.

The contributions of our mechanism advance the state of the art in the following ways:
\begin{itemize}
	\item We model a crowdsourcing mechanism on social networks without relying on third-party platforms. Our mechanism incentivizes the workers to not only offer their data truthfully but also propagate the task information to all their neighbours without paying them in advance. This guarantees that more non-redundant data will be collected.
	\item We give a novel method to evaluate the non-redundancy of the acquired data and distribute rewards to the workers without unnecessary expenses. This is achieved by a modified Shapley value.
	\item The cost of the requester will be no more than the value of the data acquired and the payoffs are adjustable by the requester, which incentivizes the requesters to apply our mechanism in real-world applications.
\end{itemize}

The remainder of the paper is organized as follows. Section \ref{section:model} describes the model of the problem. Section \ref{section:traditional} shows the challenges for directly extending traditional crowdsourcing mechanism on social networks. Section \ref{section:mech} shows the negative result and gives a description of the proposed mechanism. Section~\ref{Entropy} gives a approach to choose the valuation function for the mechanism. Section \ref{section:prop} analyzes the key properties of the mechanism. Finally, we conduct experiments in Section \ref{section:experiments} and discuss future work in Section \ref{section:conc}.

\section{The Model}
\label{section:model}
Consider a data acquisition task $T$ that is executed on a social network. To simplify the representation, we first model the network as a directed acyclic graph (DAG) $G=(V,E)$ with a single source $s\in V$ which is a special node called the requester of task $T$, and later on we will consider a general graph. In the graph, $V=\{s\}\cup N$ where $N=\{1,\dots, n\}$ denotes the set of $n$ workers and $E$ denotes the information flow between vertices. For any $i\not=j \in V$, if there is a directed edge $e_{ij}\in E$ from $i$ to $j$, then $i$ can directly propagate the task information to $j$. Here, we say $j$ is $i$'s child and $i$ is $j$'s parent. Let $r_i^{c}$ be the set of $i$'s children, $r_i^{p}$ be the set of $i$'s parents and $r_i=(r_i^{c},r_i^{p})$ be the neighbour set of each $i\in V$. If there is a directed path from $i$ to $j$, then we say $j$ is $i$'s successor and $i$ is $j$'s predecessor. For each $i\in V$, let $succ(i)$ be the set of $i$'s all successors, and $pred(i)$ be the set of $i$'s all predecessors. Each worker $i\in V$ has a depth $l_i\geq0$ representing the length of the shortest path from the requester $s$ to $i$. 

In the above network, requester $s$ wants to collect data of task $T$. Each worker $i\in N$ is a potential data owner and has a private dataset $D_i = \{ d_i^1, d_i^2, \dots, d_i^k \}$ related to task $T$, where each $d_i^j \in D_i$ represents an atomic data (e.g. an image) and $k$ is the number of atomic data owned by the worker $i$. Let $\mathcal{D}$ be the space of all possible datasets owned by workers. In our setting, we are not aiming for a single ground truth, instead, we try to collect a dataset as rich as possible.

Given the problem setting, without using crowdsourcing platforms, it is evident that the requester can only collect data among her neighbours with whom she can directly communicate. Traditionally, to collect as many required data as possible, the requester tends to do propagation with the help of some paid third-party crowdsourcing platforms (such as Amazon Mechanical Turk and gengo AI). However, the quality of the data collected cannot be guaranteed and users may tend to give redundant data which is costly but not useful for the requester.

In this paper, we propose a novel diffusion mechanism for crowdsourcing the data. The goal of the mechanism is to incentivize the workers on the social network to provide all the data they have and also propagate the task information to all their neighbours. Different from other data collection platforms, our mechanism does not reward the redundant data providers (i.e., duplicate data will not be paid). Furthermore, the workers' total payoff is relevant not only to their provided data but also to their diffusion contribution (inviting neighbours).

For each worker $i\in N$, let $\theta_i=(D_i,r_i^c)$ be $i$'s type. Due to the information flow constraint, we do not need to consider $r_i^p$ in $i$'s strategy space. Then the type profile of all the workers is denoted as $\theta=(\theta_1, \theta_2,\dots, \theta_n)=(\theta_i, \theta_{-i})$, where $\theta_{-i}$ represents the type profile of all workers except $i$. Let $\Theta_i$ be $i$'s type space, and $\Theta=(\Theta_1,\dots, \Theta_n)=(\Theta_i, \Theta_{-i})$ is the type profile space for all the workers. 

Our mechanism requires each worker $i\in N$ participating in the mechanism to report their type. Worker $i$ may not report her type $\theta_i$ truthfully if it is her interest to do so. Let $\theta_i'=(D_i',{r_i^c}')$ be the type worker $i$ reported, where $D_i'$ is the data $i$ provided and ${r_i^c}'$ is the children $i$ has invited to do the task. Let $\theta_i'=nil$ if worker $i$ is not invited or refuses to participate in the mechanism. In the rest of the paper, we use $\theta'$ to denote the type reports of all workers, which can be different from their true type profile $\theta$.

\begin{definition}	
	Given a report profile $\theta'$ of all workers, let the network generated from $\theta'$ be $G(\theta')=(V',E')\subseteq G$, where $V'=\{s\}\cup\bigcup_{i\in N}{r_i^c}'$ and $E'\subseteq E$ is reduced by $V'$.
\end{definition}

\begin{definition}
	A report profile $\theta'$ is feasible if for each worker $i\in N$ with $\theta_i'\not=nil$, there exists at least one path from requester $s$ to $i$ on the network $G(\theta')$. Given workers' true type profile $\theta$, let $\mathcal{F}(\theta)$ be the set of all feasible report profiles under $\theta$.
\end{definition}

\begin{figure}[htbp]%
	\centering
	\subfigure[A social network.]{%
		\label{net}%
		\includegraphics[width=0.5\linewidth]{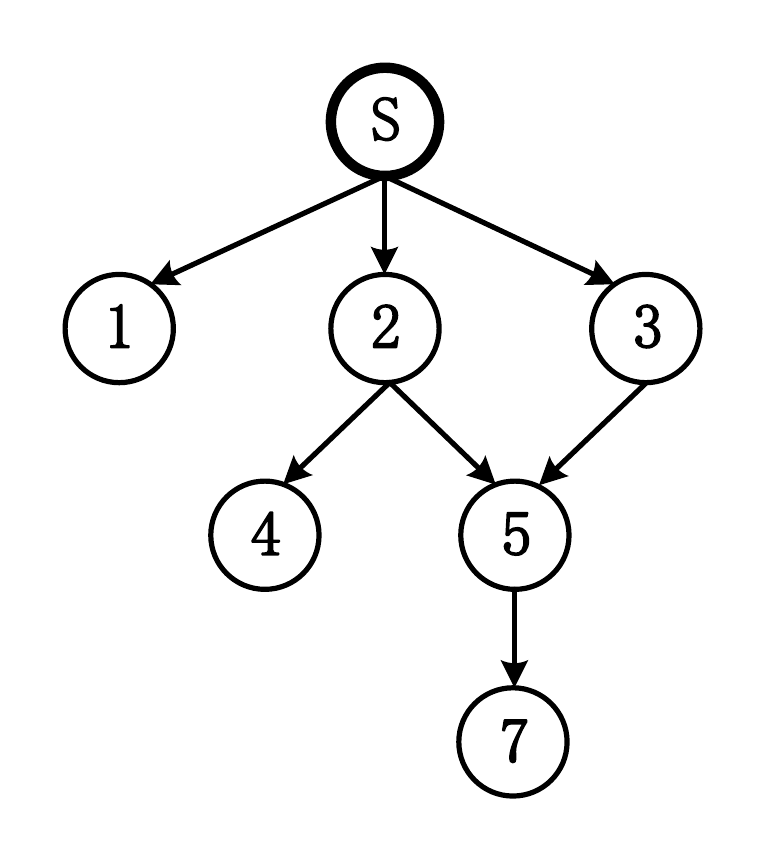}}%
	\subfigure[The generated network.]{%
		\label{gra}%
		\includegraphics[width=0.5\linewidth]{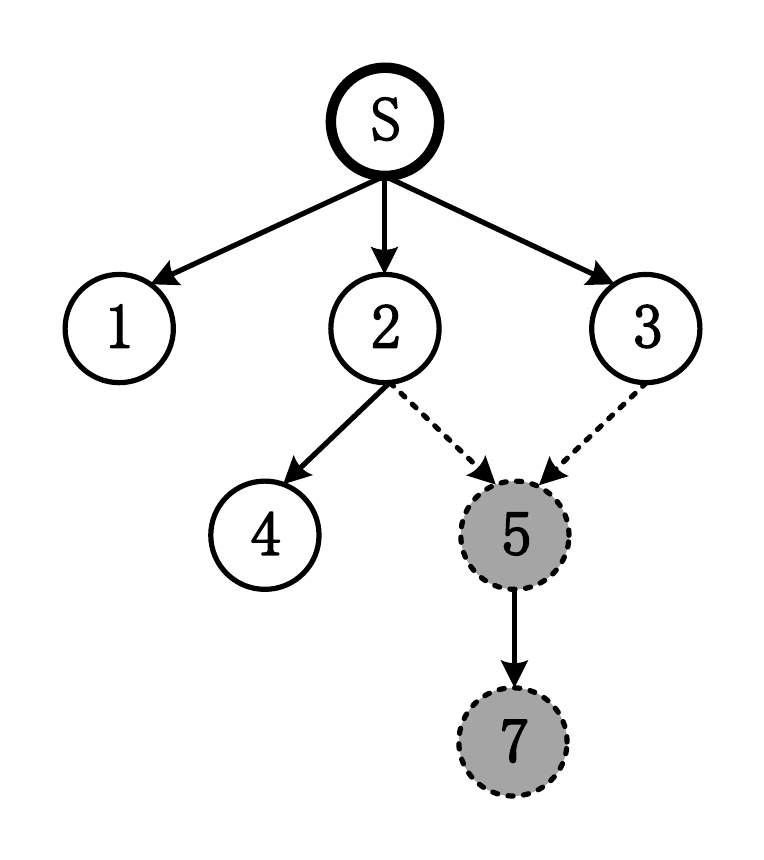}}\\
	\caption{Given ${r_2^c}'=\{4\}$ and ${r_3^c}'=\varnothing$, $\theta_5'$ and $\theta_7'$ must all be $nil$ in any feasible report profile $\theta'\in\mathcal{F}(\theta)$ since worker $2$ and $3$ do not invite $5$.}
	\label{feasible}
\end{figure}

Figure~\ref{feasible} shows an example of feasible report profiles. Intuitively, feasibility means that an agent cannot join in the mechanism if she is not invited/informed about the task, which holds naturally in practice. In other words, infeasible cases will not happen in our mechanism since a worker cannot know the task information if nobody else tells her. Therefore, the following discussion will only focus on feasible report profiles.

In the rest, we define our crowdsourcing diffusion mechanism and its desirable properties.

\begin{definition}
	A crowdsourcing diffusion mechanism  $\mathcal{M}$ on the social network is defined by a payoff policy $p=(p_i)_{i\in N}$, where $p_i:\Theta\mapsto\textbf{R}$. Given a feasible report profile $\theta'\in \mathcal{F}(\theta)$, $p_i(\theta')$ is the payoff of worker $i$ for her data contribution and diffusion contribution. 
\end{definition}

To design a crowdsourcing diffusion mechanism, we hope that workers are incentivized to give all their data and invite all their neighbours to offer more data. This property is called incentive compatibility. An incentive compatible (truthful) diffusion mechanism guarantees that for all workers $i\in N$, reporting her true type is a dominant strategy, i.e., $\theta_i'=(D_i, r_i^c)=\theta_i$.

\begin{definition}		
	A crowdsourcing diffusion mechanism $\mathcal{M}=p$ is \textbf{incentive compatible} (IC) if $p_i(\theta_i,\theta_{-i}')\geq p_i(\theta_i'',\theta_{-i}'')$, for all $i\in N$, all $\theta' \in \mathcal{F}(\theta)$, all $\theta_i''\in\Theta_i$, where for any $j\neq i$, $\theta_j'' = \theta_j'$ if there exists a path from $s$ to $j$ in $G(\theta_i'',\theta_{-i}')$, otherwise $\theta_j'' = nil$.	
\end{definition}

Note that in the IC definition, we need to adjust the reports of $\theta_{-i}'$ when $i$'s report changes because some workers may not know the task information consequently. 

Under the crowdsourcing diffusion mechanism $\mathcal{M}=p$, requester's payment $\mathcal{P}_s$ is the sum of the payments made to the workers. Traditionally, budget constraint requires the requester's payment to be always bounded by a constant. However, in our setting, owing to the objective to acquire as many data as possible, we extend the definition and say $\mathcal{M}$ is budget constrained if $\mathcal{P}_s$ is bounded by the total value of the dataset collected, which is reasonable since the expenditure of the requester will be no more than the value of the data acquired.

\begin{definition}
	A crowdsourcing diffusion mechanism $\mathcal{M}=p$ is \textbf{budget constrained} (BC) if for all $\theta\in \Theta$ and all $\theta' \in \mathcal{F}(\theta)$, we have
	\begin{center}
		$\mathcal{P}_s(\theta')=\sum_{i\in N}p_i(\theta')\leq v(D)$
	\end{center}
	where $v(D)$ is the value of the total dataset $D$ acquired by the requester.
\end{definition}

We say a mechanism is unbounded reward constrained if there is no limitation for a worker's payoff even if the data she owned and the number of her neighbours are limited. To meet the requirement, the mechanism should reward workers for their inviting, which is essential in practice for incentivizing diffusion. 
\begin{definition}
	A crowdsourcing diffusion mechanism $\mathcal{M}=p$ is \textbf{unbounded reward constrained} (URC) if there exists some positive integer $d$ such that for every real $a$, there exists a worker $i$ of maximum number of neighbours $d$ and a feasible reported type $\theta'\in\mathcal{F}(\theta)$ in some social network such that
	\begin{center}
		$p_i(\theta')\geq a$
	\end{center}
\end{definition}

In a data acquisition problem, whether a mechanism can differentiate the redundancy of data is important. A data-redundancy differentiable mechanism will not reward more to those repeated data, which reduces the requester's unnecessary expenditure. Thus we also take it into consideration.

\section{Traditional Crowdsourcing Mechanism}
\label{section:traditional}
Consider the data acquisition problem based on social networks, seemingly the traditional crowdsourcing mechanism can be easily extended to the new setting. However, in this section, we first extend the traditional crowdsourcing mechanism on social networks and then show that the extended mechanism may distribute rewards for redundant data and violate the properties.

A classic crowdsourcing mechanism gives a fixed reward to each worker participating in the task without considering the quality of the data they provide. In this way, the mechanism cannot differentiate agents based on their capabilities and contributions. That is, no matter what data and how many data a worker provides, she will receive a fixed reward which is predefined by the requester. Besides, workers will not be incentivized to give all the data they have since their reward will not increase with the amount of the data they provide.

A simple modification of the above mechanism is to distribute reward according to their work. For example, a fixed reward is predefined for an atomic of data. Then the more data a worker provides, the more reward will be given to her. However, since the budget is constrained and some reward will be given to redundant data, it will not collect enough data for the requester. Moreover, the workers have no incentive to invite their neighbours to do the task as there is no reward for doing so, which violates the unbounded reward constraint property.

Therefore, diffusion contribution should also be rewarded in order to incentivize the workers to inform their neighbours about the task. A trivial method is to set a fixed amount of money to be the bonus pool for inviting their neighbours. Then the money will be shared among all the workers with diffusion contribution by scaling down to meet the constraint of budget. However, it may violate the property of incentive compatibility and unbounded reward constraint since the workers' reward for their diffusion contribution is related to the number of workers who share the bonus pool. Hence, the workers may refuse to invite others in order to share more money. 

The above discussion raises a few questions: How can the mechanism avoid distributing the rewards to those redundant data? How can the mechanism incentivize the workers to diffuse the task information, without sacrificing the property of incentive compatibility, budget constraint and unbounded reward constraint? In the next section, we will introduce our mechanism which can handle all these problems.

\section{Crowdsourcing Diffusion Mechanism} 
\label{section:mech}
In this section, we first show the negative result of mechanism design for data acquisition settings with cost. Then we focus on the cost-free setting and present our novel diffusion mechanism with desirable properties.

\subsection{Impossibility Theorem}

In what follows, we first study the data acquisition setting, where each worker provides her data with some cost. We investigate whether there exists any mechanism that satisfies incentive compatibility, individual rationality (non-negative utility), budget constraint and unbounded reward constraint when the cost for providing data is considered.

Let $c(D_i')$ be the cost of worker $i$ for providing her dataset $D_i'\subseteq D_i$, which is verifiable for the requester. Then, for worker $i\in N$ of type $\theta_i$, given a feasible report profile $\theta'$ of all buyers, $i$'s utility is defined as
\begin{center}
	$u_i(\theta_i,\theta')=p_i(\theta')-c(D_i')$
\end{center}
where $\theta'=(\theta_i',\theta_{-i}')$ and $\theta_i'=(D_i',{r_i^c}')$.

It is natural to require the mechanism to guarantee the non-negative utility for each worker no matter what dataset she provides and how many neighbours she invites. We say the mechanism is individually rational if it satisfies such property.

\begin{definition}
	A crowdsourcing diffusion mechanism $\mathcal{M}=p$ is \textbf{individually rational} (IR) if $u_i(\theta_i,\theta')\geq0$ for all $i\in N$, all $\theta\in\Theta$ and all $\theta'\in \mathcal{F}(\theta)$.
\end{definition}

Now, we show the negative result regarding the mechanism design problem in the setting with cost.

\begin{proposition}\label{impossibility1}
	In the setting with cost, there exists no mechanism which is individually rational and budget constrained.
\end{proposition}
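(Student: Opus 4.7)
The plan is to derive a contradiction by exhibiting a single instance on which IR and BC cannot be simultaneously satisfied. The observation driving the argument is that the cost function $c(\cdot)$ and the value function $v(\cdot)$ are a priori independent objects in the model: the former is a property of the worker producing the data, the latter of the data as seen by the requester. Nothing ties them together except through IR and BC themselves, so we can freely pick an instance where costs exceed values and watch the two conditions collide.

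Concretely, I would proceed as follows. First, I would write out what IR and BC each imply about the sum of payments on any fixed feasible report profile $\theta'$: summing $p_i(\theta')\geq c(D_i')$ across all reporting workers gives the lower bound $\sum_{i\in N}p_i(\theta')\geq \sum_{i\in N}c(D_i')$, while BC supplies the upper bound $\sum_{i\in N}p_i(\theta')\leq v(D)$. Chaining these forces $v(D)\geq\sum_{i\in N}c(D_i')$ on every feasible profile, which must hold for any mechanism that is simultaneously IR and BC. Second, I would construct a minimal counterexample to this derived inequality. Take the trivial network consisting of the requester $s$ together with a single neighbour $i$ (so that the only nontrivial feasible profile is $\theta'=\theta$ with $D_i'=D_i$), let $D_i=\{d\}$ be a single atomic datum, and choose the instance so that $c(D_i)>v(\{d\})$. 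Since $v(D)=v(\{d\})$ here, the derived inequality $v(D)\geq c(D_i')$ fails, yielding the contradiction.

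The main obstacle is really just verifying that the constructed instance is a legitimate instance of the model rather than a degenerate one. Specifically, I must make sure the single-worker profile is feasible (which it is, since $s$ is directly connected to $i$), that the cost function is allowed to take the chosen value (the model places no bound linking $c$ to $v$), and that IR is being applied to the correct report profile (the truthful one, on which the worker does incur the cost $c(D_i)$). Once these are checked, the contradiction is immediate and the proposition follows: no payoff rule $p$ can satisfy both IR and BC in the setting with cost, which also motivates restricting attention to the cost-free setting in the remainder of the paper.
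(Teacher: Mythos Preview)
Your argument is correct and shares the same skeleton as the paper's proof: both sum the IR inequalities $p_i(\theta')\geq c(D_i')$ over all workers, combine with BC to obtain the necessary condition $v(D)\geq\sum_{i\in N}c(D_i')$, and then exhibit an instance on which this fails. The difference lies only in the counterexample. You take a single worker whose cost for one datum exceeds its value to the requester, which is the most economical instance and is perfectly valid since the model imposes no relation between $c$ and $v$. The paper instead fixes the costs and lets the number of workers $n$ grow while they all hold the same dataset, so that $v(D)$ stays bounded but $\sum_i c(D_i')\geq n\min_i c(D_i')$ eventually overtakes it. Your route is shorter; the paper's route has the virtue of showing that the impossibility persists even when every individual cost is arbitrarily small relative to the value, which ties more directly to the redundancy theme developed elsewhere in the paper.
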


\begin{proof}
	Consider the social network with a requester and $n$ workers. According to the definition of individual rationality, we have $u_i(\theta_i,\theta')=p_i(\theta')-c(D_i')\geq0$. Sum up the equations for all $i\in N$, we can infer that 
	\begin{equation*}
		\begin{aligned}
			&\sum_{i\in N}\left(p_i(\theta')-c(D_i')\right)\geq0\\
			&\sum_{i\in N}p_i(\theta')\geq\sum_{i\in N}c(D_i')
		\end{aligned}
	\end{equation*}
	According to the definition of budget constraint, we have $\sum_{i\in N}p_i(\theta')\leq v(D)$. Thus, we can infer the necessary condition that
	\begin{equation*}
		\begin{aligned}
			n\min_{i\in N} c(D_i')&\leq\sum_{i\in N}c(D_i')\leq v(D)\\
			n&\leq\frac{v(D)}{\min_{i\in N} c(D_i')}
		\end{aligned}
	\end{equation*}
	However, for every $\frac{v(D)}{\min_{i\in N} c(D_i')}$, in which $v(D)$ does not depend on $n$, there always exists a real $m$ such that $n>\frac{v(D)}{\min_{i\in N} c(D_i')}$ for each $n>m$.
\end{proof}

This proposition shows that even if we do not consider IC and URC, there is a trade-off between IR and BC for the setting with cost. To deal with the problem, an alternative method may be preparing extra money to compensate the cost for each worker. Thus, in the following discussion, we only focus on the cost-free setting.

\subsection{The mechanism}
Next, we will introduce our novel crowdsourcing diffusion mechanism (CDM). Under CDM, redundant data will not be rewarded and the workers' reward will increase with the amount of non-redundant data provided. Moreover, the workers are incentivized to diffuse the task information to as many neighbours as possible to gain more reward for their diffusion contribution. The mechanism is also budget constrained.

The payoff policy of CDM is composed of two parts: data contribution and diffusion contribution. The data contribution indicates how the requester validates workers' provided data, and the diffusion contribution indicates how the requester validates workers' diffusion on the social network. Finally, we will give the total payoff policy by applying both.

\subsubsection{Data Contribution}

Since the data-redundancy of differentiability is taken into consideration, an alternative method to evaluate data contribution is Shapley value, which is a classical method to allocate interest in collaborative games~\cite{roth-alvin:roth1988shapley}. Our data acquisition game is a kind of collaborative game. We define $v:\mathcal{D}\mapsto \mathbb{R}^+$ as the valuation function that evaluates the value of a dataset $D$ for the requester. Here the valuation function $v$ should be monotone increasing and bounded, i.e., for datasets $D_x$ and $D_y$, if $D_x \subseteq D_y$, then $v(D_x) \leq v(D_y) < \infty$. 

Then if we directly apply the Shapley value among all workers on the network, the data contribution for each worker $i$ will be:

\begin{equation}\label{Eq:shapley}
\phi_i = \sum_{S\subseteq N \setminus \{i\}} \frac{|S|!(|N| - |S| - 1)!}{|N|!} \left( v(D_{S\cup \{i\}}') - v(D_S') \right)
\end{equation}

Here $D_S$ is the dataset offered by the workers in set $S$: $D_S = \bigcup_{i\in S} D_i$. Intuitively, the Shapley value calculates the average marginal valuation contribution of each worker without considering the network structure. However, with this simple application, workers may not be willing to share the task information with their neighbours.

\begin{proposition}
	A crowdsourcing diffusion mechanism using Shapley value directly as the evaluation of data contribution is not incentive compatible.
\end{proposition}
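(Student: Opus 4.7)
The plan is to refute IC by exhibiting a single small instance where a worker strictly gains by withholding an invitation. Since IC is universally quantified over networks, type profiles and deviations, one explicit profitable deviation is enough.

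First I would take the simplest nontrivial chain: requester $s$, worker $1$ as the only child of $s$, and worker $2$ as the only child of $1$. The parameter choice that makes the attack bite is redundant data, so I set $D_1 = D_2 = \{d\}$ and pick any monotone valuation with $v(\varnothing) = 0$ and $v(\{d\}) = 1$. I would then compare worker $1$'s Shapley payoff under two reports: the truthful one $\theta_1 = (D_1,\{2\})$ and the deviation $\theta_1'' = (D_1,\varnothing)$ that hides $2$. Crucially, after the deviation $2$ has no path from $s$ in $G(\theta_1'',\theta_{-1}')$, so by the IC definition $\theta_2''$ is coerced to $nil$; the chain structure guarantees this without any extra bookkeeping.

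Next I would plug these reports into Equation~(\ref{Eq:shapley}). Under the deviation only worker $1$ participates, giving $\phi_1 = v(\{d\}) - v(\varnothing) = 1$. Under the truthful report $N = \{1,2\}$, and the two marginal contribution terms $v(\{d\}) - v(\varnothing) = 1$ (for $S = \varnothing$) and $v(\{d\}) - v(\{d\}) = 0$ (for $S = \{2\}$) are each weighted by $1/2$, yielding $\phi_1 = 1/2 < 1$. Hence the deviation strictly increases worker $1$'s payoff, which contradicts IC.

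There is no real analytic obstacle; the delicate part is lining up the construction with the IC definition's treatment of uninvited workers, so that the deviation genuinely removes $2$ from the mechanism rather than leaving an alternative path to her. The chain was chosen precisely for this reason. Conceptually, the counterexample reflects a general pathology of the direct Shapley formula: averaging marginal contributions over all orderings means that inviting a neighbour whose data overlaps with one's own zeros out the marginal contribution on every ordering where that neighbour appears first, thereby systematically punishing diffusion.
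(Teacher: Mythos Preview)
Your proposal is correct and essentially coincides with the paper's own proof: the paper also takes the chain $s \to 1 \to 2$ with $D_1 = D_2$, computes $\phi_1 = v(D)/2$ under truthful diffusion versus $\phi_1' = v(D)$ when worker~$1$ withholds the invitation, and concludes that IC fails. Your write-up is a bit more careful about aligning the deviation with the IC definition's $nil$-coercion of uninvited workers, but the construction and the computation are the same.
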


\begin{proof}
	Consider the network in Figure~\ref{chart:first}, if $D_1 = D_2 = D$ and workers 1 and 2 truthfully offer their data, i.e., $D_1' = D_1$ and $D_2' = D_2$, we have $\phi_1 = \phi_2 = v(D)/2$ according to Equation (\ref{Eq:shapley}).
	
	However, if the worker 1 choose to not propagate the task information to worker 2, then her data contribution becomes $\phi_1' = v(D) > \phi_1$.
\end{proof}

Intuitively, the reason why Shapley value fails is that it divides the rewards equally among all the workers who provide the same data whatever the network structure. Then, the workers will not be willing to invite their neighbours to the task as the neighbours who have the same data will compete with the worker to reduce her payoff, which againsts what we want to achieve with the mechanism. All the other methods which cannot differentiate the invitation relationship will run into such problem.

To combat the diffusion issue with Shapley value, we design a novel payoff sharing policy called \textbf{layered Shapley value}. Let $L_i$ be the set of all the workers with depth $i$: $L_i = \{j| j\in N\ \text{and}\ l_j = i\}$, and $L_i^*$ be all the workers in the first $i$ layers: $L_i^* = \bigcup_{k=1}^i L_k$. Suppose there are totally $K$ layers on the network, then for each worker $i$, the layered Shapley value is defined as follows:

\begin{align}\label{Eq:layeredShapley}
	\hat{\phi}_i & =  \sum_{S\subseteq L_{l_i} \setminus \{i\}} \frac{|S|!(|L_{l_i}| - |S| - 1)!}{|L_{l_i}|!} \cdot \notag \\
	& \left( v\left( D_{L_{l_i-1}^*\cup S \cup \{i\}}' \right) - v\left( D_{L_{l_i-1}^* \cup S}' \right) \right)
\end{align}

Intuitively speaking, Equation (\ref{Eq:layeredShapley}) calculates the average marginal contribution of the workers in the layer using the standard Shapley value, but assumes that all the workers in the prior layers have already joined the coalition before them. More specifically, for the first layer (i.e., the requester's neighbours), the standard Shapley value is applied to calculate their data contribution among the workers in the first layer only. Then for the workers in the second layer, we also apply the Shapley value to compute their data contribution, under the condition that all the workers in the first layer have already been in the coalition. The calculation of workers in the second layer will not change the Shapley value of those in the first layer. This continues for all the other layers. This ensures that workers close to the requester will have a higher priority to get rewards for their data contributions. More importantly, with the layered Shapley value, we can still ensure the following key properties: 
\begin{enumerate}[(1)]
	\item The sum of all workers' layered Shapley value is equal to the valuation of the whole dataset given by workers, i.e. $\sum_{i \in N} \hat{\phi}_i = v(D_{N}')$.
	\item If $i$ and $j$ are two workers in the same layer $L_l$ who are equivalent in the sense that $v(D_{L_{l-1}^*\cup S\cup \{i\}}') = v(D_{L_{l-1}^*\cup S\cup \{j\}}')$ for all $S \subseteq L_l$ $s.t.$ $i,j \notin S$, then $\hat{\phi}_i = \hat{\phi}_j$.
	\item If there is a worker $i$ who has $v(D_{L_{l_i-1}^*\cup S\cup \{i\}}') = v(D_{L_{l_i-1}^*\cup S}')$ for all $S \subseteq L_{l_i}$, which indicates that she does not provide any extra information, then $\hat{\phi}_i = 0$.
\end{enumerate}

Therefore, we will not reward redundant data which has been provided by others in the prior layers. The reason is that in this way child agents cannot decrease the utility of their parents and then all the workers are incentivized to propagate the task information to their neighbours.

\begin{figure}[t]%
	\centering
	\subfigure[]{%
		\label{shapley}%
		\includegraphics[width=0.5\linewidth]{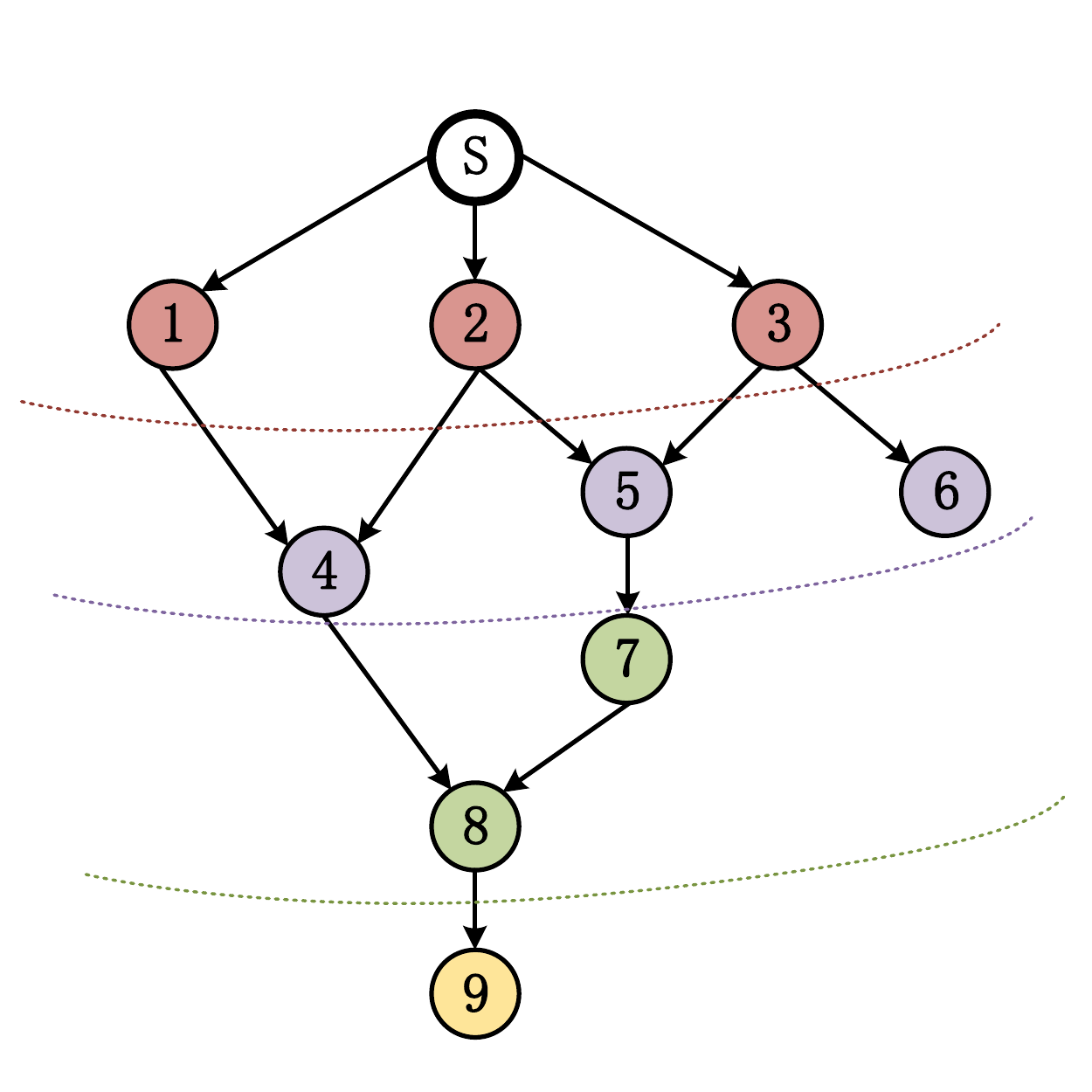}}%
	\subfigure[]{%
		\label{diffusion}%
		\includegraphics[width=0.5\linewidth]{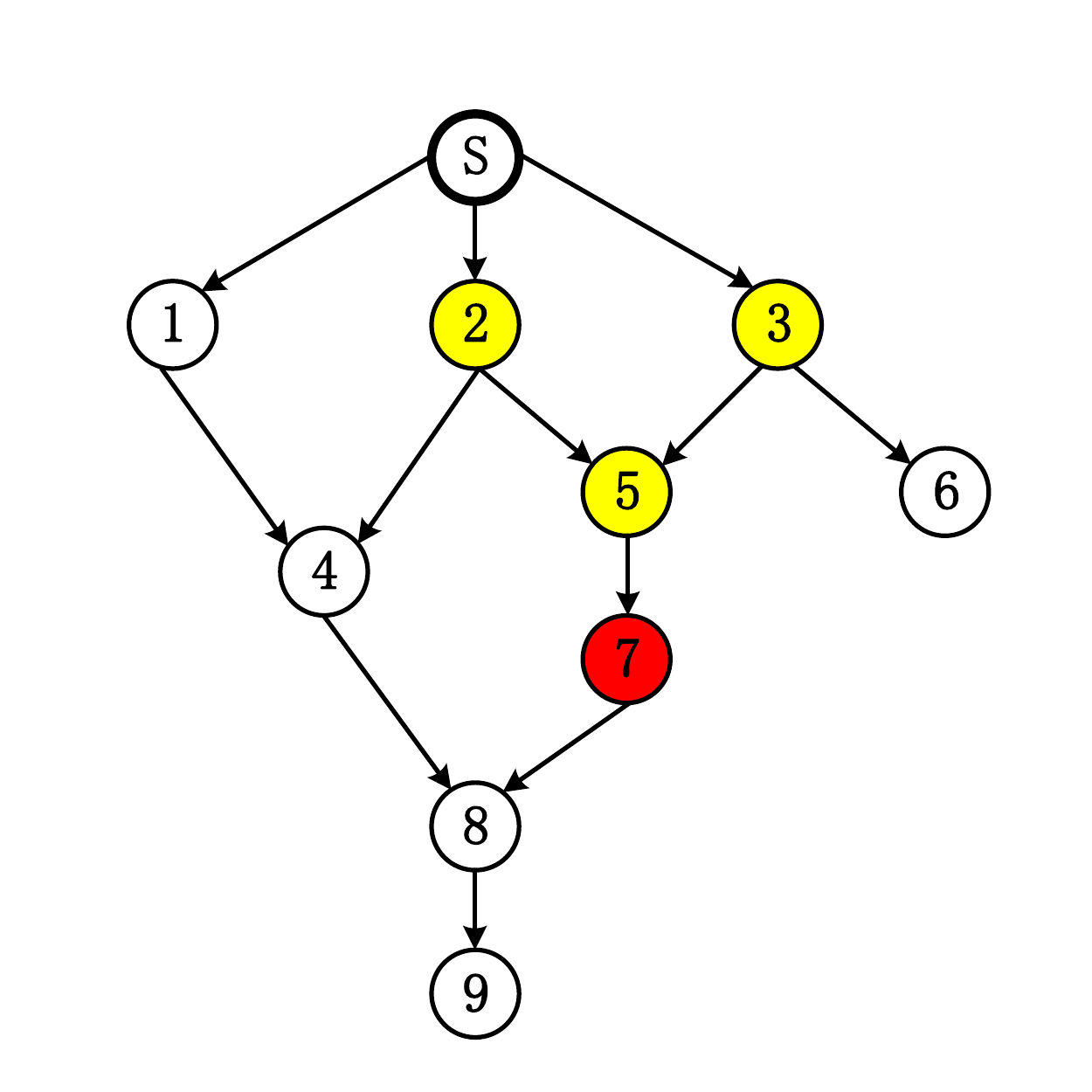}}\\
	\caption{(a) layers in a diffusion network; (b) an example of diffusion contribution}
\end{figure}

Take the network in Figure \ref{shapley} as an example. Worker 1, 2 and 3 are in layer 1; worker 4, 5 and 6 are in layer 2; worker 7 and 8 are in layer 3; worker 9 is in layer 4. The layered Shapley value of worker 1 is: $\hat{\phi}_1 = \frac{1}{6}\cdot ( v(D_1') + v(D_1') + (v(D_{\{1,2\}}') - v(D_2')) + (v(D_{\{1,3\}}') - v(D_3')) + (v(D_{\{1,2,3\}}') - v(D_{\{2,3\}}')) + (v(D_{\{1,2,3\}}') - v(D_{\{2,3\}}'))) = \frac{1}{6} \cdot ( 2v(D_1') -v(D_2') - v(D_3') + v(D_{\{1,2\}}') + v(D_{\{1,3\}}') - 2v(D_{\{2,3\}}') + 2v(D_{\{1,2,3\}}'))$. This is consistent with intuition that what non-redundant data should be.

\subsubsection{Diffusion Contribution}

In traditional crowdsourcing mechanisms, only those who are aware of the task information can compete for some rewards. So the participants who have been informed have no reason to invite their neighbours to do the task. Therefore, to incentivize workers to propagate the information, CDM will give them payoffs for their diffusion contribution. In other words, the workers will gain benefits by spreading the task information to their neighbours effectively.

In our mechanism, the diffusion contribution of a worker $i$ for her successor $j$ is recursively computed as:

\begin{equation}\label{Eq:diff}
\pi_{i,j}=\left\{
\begin{array}{ll}
\sum_{k\in {r_i^p}'}\pi_{k,j}\cdot\gamma\cdot \frac{1}{m_k^j}     &       \text{if	} i\in pred(j)\backslash {s}\\
\alpha \cdot \hat{\phi}_j    &       \text{if } i=s\\
0     &       \text{otherwise}
\end{array} \right.
\end{equation}

where $0<\gamma\leq\frac{1}{2}$ and $0<\alpha\leq1$.

Here, the parameters are interpreted as: $m_k^j$ is the number of worker $k$'s child neighbours which has a path to $j$. For example, in Figure \ref{diffusion}, among all the child neighbours of the requester, only worker $2$ and worker $3$ have a path to worker $7$. Hence, $m_s^7=2$. Similarly, $m_2^7=m_3^7=1$. Factor $\gamma$ is a discount factor and $\alpha$ is the proportion factor, which are predefined coefficients. Note that $\pi_{s,j}$ is a virtual payoff of the requester to simplify the calculation, which will not be paid actually.

\begin{figure}[t]%
	\centering
	\subfigure[]{%
		\label{chart:first}%
		\includegraphics[width=0.245\linewidth]{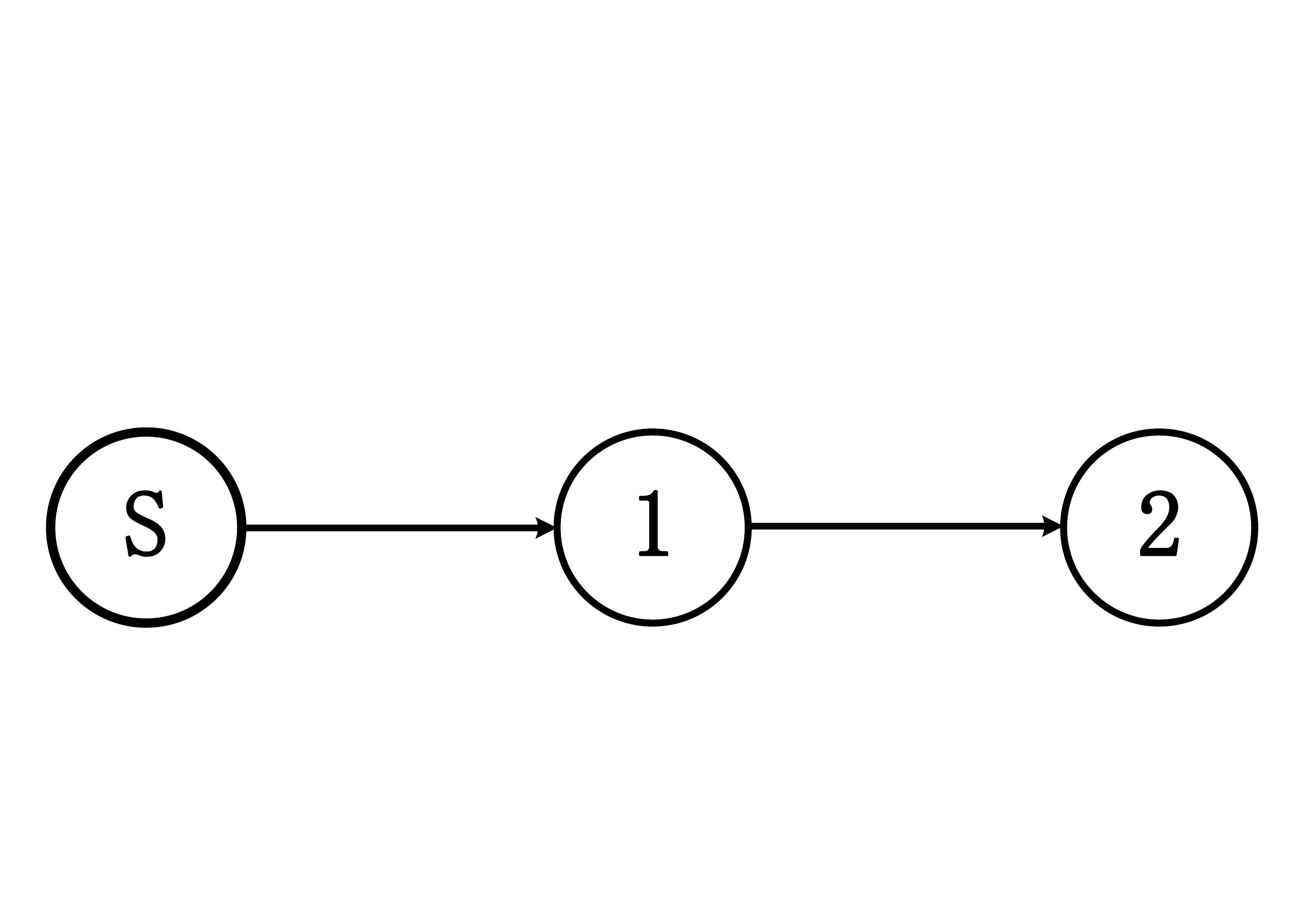}}%
	\subfigure[]{%
		\label{chart:second}%
		\includegraphics[width=0.245\linewidth]{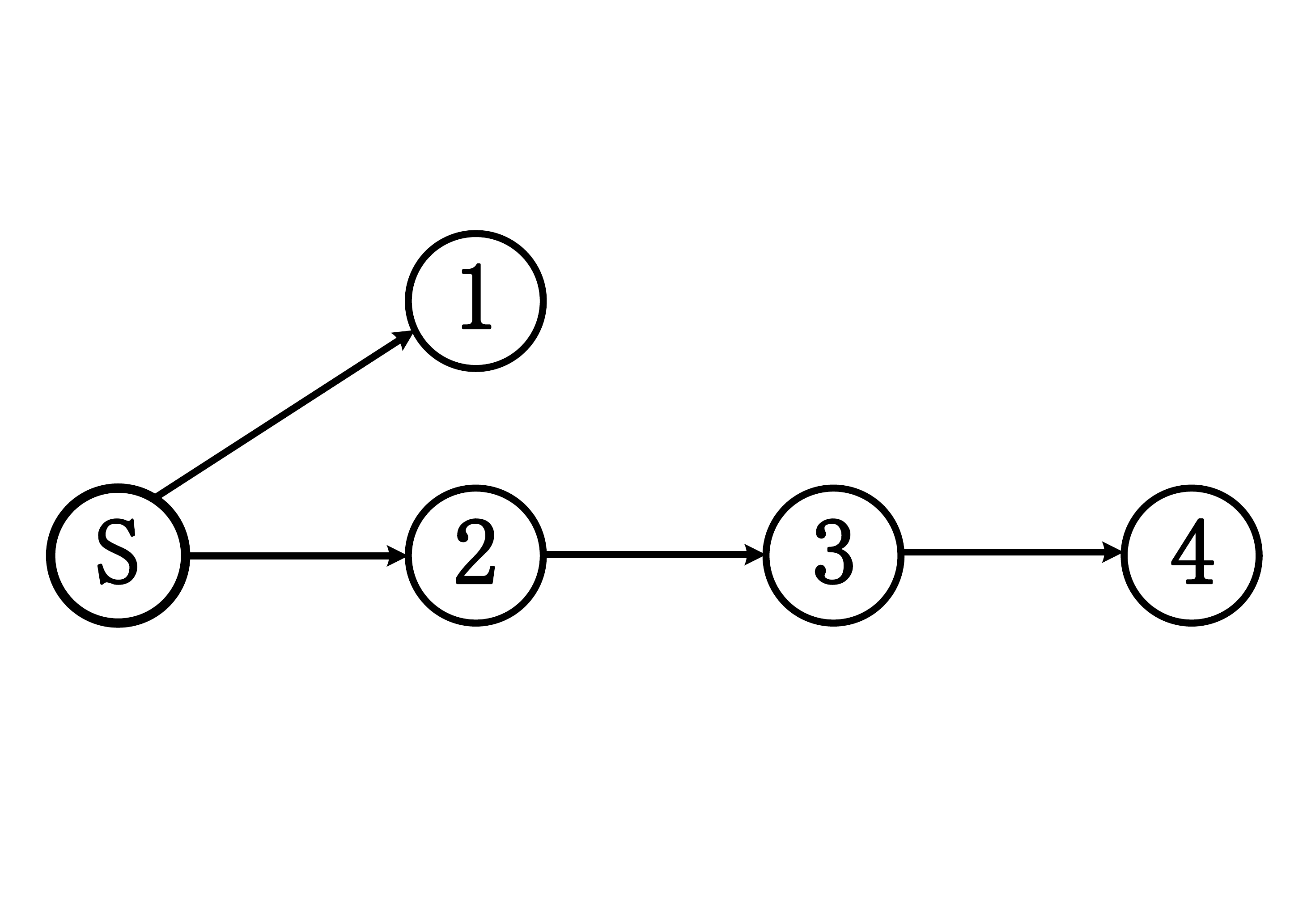}}
	\subfigure[]{%
		\label{chart:third}%
		\includegraphics[width=0.245\linewidth]{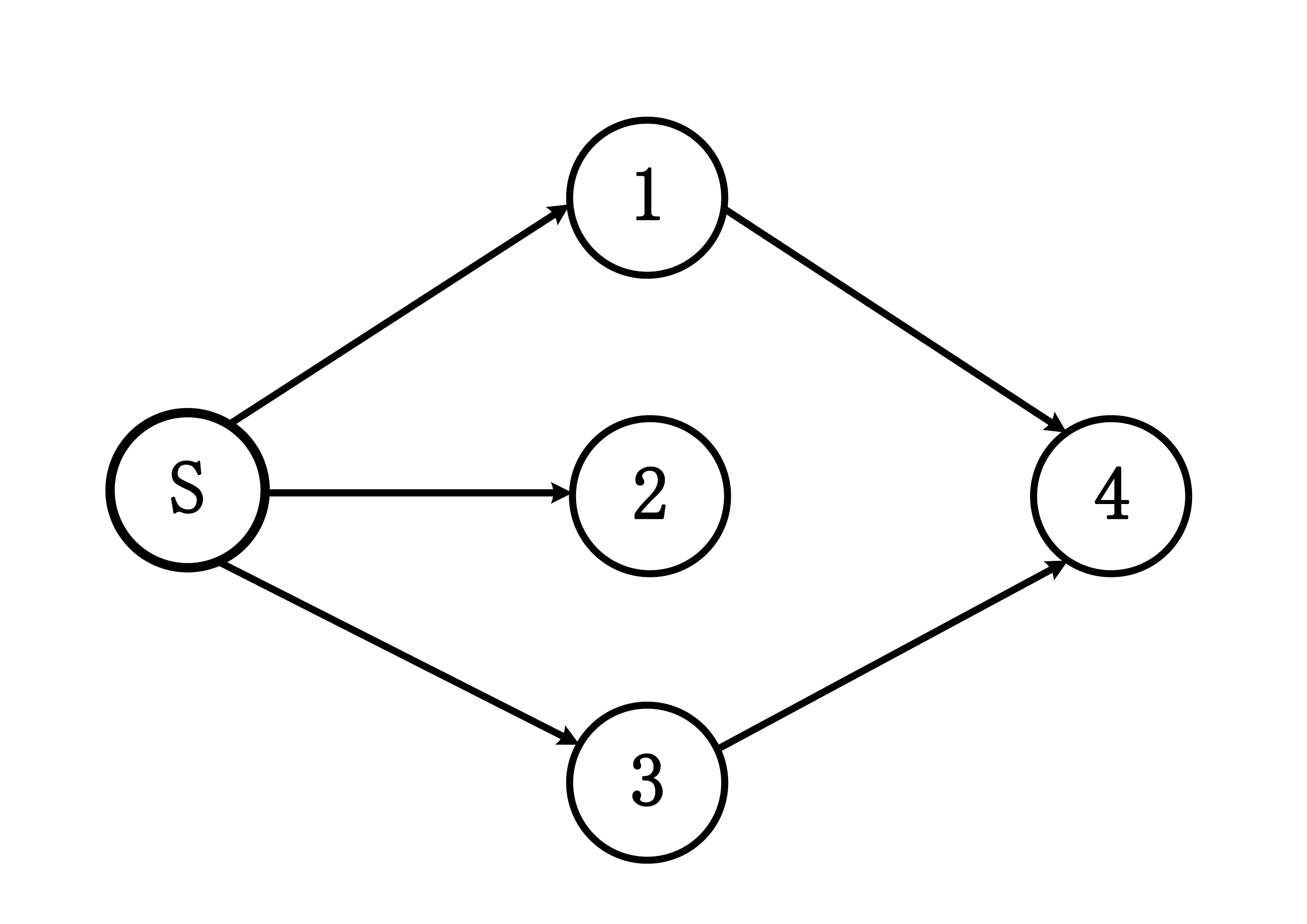}}%
	\subfigure[]{%
		\label{chart:fourth}%
		\includegraphics[width=0.245\linewidth]{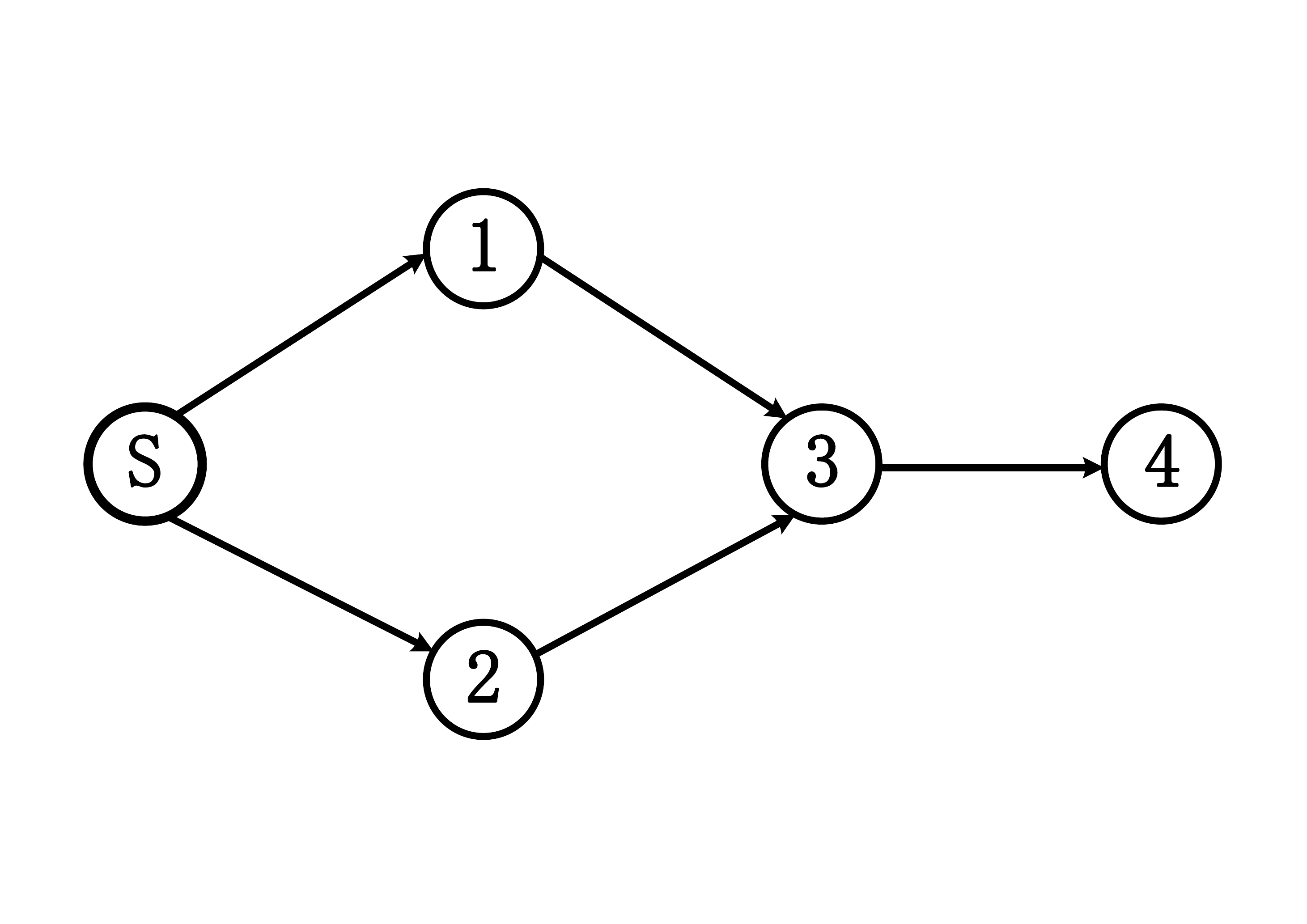}}
	\caption{Basic cases in crowdsourcing diffusion mechanism}
\end{figure}

To show the intuition behind our mechanism, we study three basic cases and only consider the diffusion contribution of worker $3$ for her successor worker $4$ in Figure \ref{chart:second}, \ref{chart:third} and \ref{chart:fourth}. Firstly, we have $\pi_{s,4}=\alpha\cdot\hat{\phi}_4$ for all three cases. In Figure \ref{chart:second}, since the network is a chain, the contribution of a worker is her parent's contribution multiplied by a discount factor $\gamma$, then we have $\pi_{2,4}=\gamma\cdot\pi_{s,4}=\gamma\cdot\alpha\cdot\hat{\phi}_4$ and $\pi_{3,4}=\gamma\cdot\pi_{2,4}=\gamma^2\cdot\alpha\cdot\hat{\phi}_4$. In Figure \ref{chart:third}, since the requester has two children who are connected to worker $4$, the worker $1$ and $2$ have to share the discounted contribution from their parent, then we have $\pi_{1,4}=\pi_{3,4}=\frac{1}{2}\gamma\cdot\pi_{s,4}=\frac{1}{2}\gamma\cdot\alpha\cdot\hat{\phi}_4$. In Figure \ref{chart:fourth}, since the diffusion path from $1$ to $4$ and from $2$ to $4$ both contains worker $3$, worker $3$'s contribution are the sum of the discounted contribution from her parents. Then we have $\pi_{3,4}=\gamma\cdot\pi_{1,4}+\gamma\cdot\pi_{2,4}=\gamma\cdot\alpha\cdot\hat{\phi}_4$. Therefore, all the workers' contribution can be computed by Equation (\ref{Eq:diff}). Finally, the total diffusion contribution of worker $i$ is defined as:
\begin{equation*}
	\pi_i = \sum_{j\in N}\pi_{i,j}
\end{equation*}

The intuition behind the diffusion contribution of CDM is that if a worker's successor provides some non-redundant data, then the worker will be rewarded for her diffusion. Furthermore, from Equation (\ref{Eq:diff}), we can easily conclude that the diffusion contribution is evaluated along the path layer by layer.

The requester can adjust the two factors $\alpha$ and $\gamma$ for different demands. A higher $\alpha$ implies that the requester is willing to give more rewards for diffusion contribution, which will also bring greater expenses. A higher $\gamma$ means that the diffusion contribution will decrease rapidly with depth.

\begin{lemma}\label{lemma:bound}
	Given a data contribution $\hat{\phi}_j$ related to task $T$ from worker $j$, the diffusion contribution distributed to all her predecessors is bounded.
\end{lemma}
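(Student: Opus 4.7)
The plan is to unfold the recursion defining $\pi_{i,j}$ and exploit a flow–conservation identity between a node and its children, combined with the contraction factor $\gamma\le1/2$. Let $X=\sum_{i\in pred(j)\setminus\{s\}}\pi_{i,j}$ denote the total (non-virtual) diffusion contribution distributed to $j$'s strict predecessors; the goal is to show that $X$ is bounded by a constant depending only on $\alpha$, $\gamma$ and $\hat\phi_j$.

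First, I would plug the recursion $\pi_{i,j}=\sum_{k\in {r_i^p}'}\pi_{k,j}\cdot\gamma/m_k^j$ into $X$ and swap the order of summation, grouping terms by the parent $k$ instead of the child $i$. For each fixed $k$ with $\pi_{k,j}>0$ (i.e.\ $k=s$ or $k\in pred(j)$), the number of children $i$ of $k$ that are themselves predecessors of $j$ — call this $n_k$ — is at most $m_k^j$, because $m_k^j$ counts all children of $k$ with a path to $j$ (which is $n_k$, plus one if $j$ itself happens to be a child of $k$). Hence $n_k/m_k^j\le 1$, and the swap yields
\begin{equation*}
X \;=\; \sum_{k}\, n_k\cdot\pi_{k,j}\cdot\frac{\gamma}{m_k^j} \;\le\; \gamma\sum_{k}\pi_{k,j} \;=\; \gamma\bigl(\pi_{s,j}+X\bigr).
\end{equation*}

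Solving this inequality for $X$ and substituting $\pi_{s,j}=\alpha\hat\phi_j$ gives
\begin{equation*}
X \;\le\; \frac{\gamma}{1-\gamma}\,\alpha\hat\phi_j \;\le\; \alpha\hat\phi_j,
\end{equation*}
where the last bound uses $0<\gamma\le 1/2$, so $\gamma/(1-\gamma)\le 1$. This gives the stated boundedness (and even identifies $\alpha\hat\phi_j$ as an explicit upper bound), while also justifying the role of the constraint $\gamma\le 1/2$ in the mechanism's parameter choice.

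The only slightly subtle step — and what I expect to be the main bookkeeping obstacle — is the sum swap and the correct interpretation of $n_k$ versus $m_k^j$, because one must be careful about (i) whether $j$ itself appears as a child of some $k$ (so that $m_k^j=n_k+1$ rather than $n_k$), and (ii) the fact that $\pi_{s,j}$ is a "virtual" term that appears in the recursion but is excluded from $X$; both issues are handled cleanly by the case analysis $n_k\le m_k^j$ and by separating the $k=s$ contribution on the right-hand side of the inequality above.
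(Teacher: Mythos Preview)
Your argument is correct and yields exactly the same quantitative bound as the paper, $\sum_{i}\pi_{i,j}\le\frac{\gamma}{1-\gamma}\,\alpha\hat\phi_j$. The route, however, is different. The paper's proof (which is very terse) implicitly stratifies by recursion depth: it asserts that the total contribution passed down at the $k$-th hop is at most $\gamma^k\alpha\hat\phi_j$ and then sums the geometric series $\sum_{k\ge 1}\gamma^k$. Your proof instead collapses all depths into a single self-referential inequality $X\le\gamma(\pi_{s,j}+X)$ obtained by swapping the order of summation and using the flow-conservation fact $n_k\le m_k^j$. What your approach buys is rigor and robustness: you never need the graph to be cleanly layered, and the bookkeeping about $j$ possibly being a direct child of $k$ (so $m_k^j=n_k+1$) is handled automatically by the inequality $n_k/m_k^j\le 1$. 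What the paper's approach buys is intuition: it makes visible that the contribution decays geometrically with distance from $s$, which motivates the constraint $\gamma\le 1/2$. Both arguments rest on the same underlying observation---that each node forwards at most a $\gamma$-fraction of its own $\pi$-value to the next generation---so they are close cousins rather than unrelated proofs.
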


\begin{proof}
	According to the definition of diffusion contribution in Equation (\ref{Eq:diff}), we can calculate the total contribution of $j$'s predecessors as: 
	\begin{align*}
		\sum_{i\in N}\pi_{i,j}  \leq \sum_{k=1}^{\infty}\gamma^k\cdot\alpha\cdot\hat{\phi}_j \leq \frac{\gamma}{1-\gamma}\cdot\alpha\cdot\hat{\phi}_j 		
	\end{align*}
	Since $\hat{\phi}_j$ is bounded according to the properties of Equation (\ref{Eq:layeredShapley}), the total contribution of $j$'s predecessors is also bounded.
\end{proof}

Take the network in Figure \ref{diffusion} as an example. Let $\alpha=1$ and $\gamma=\frac{1}{2}$. If worker 7 has a data contribution $\hat{\phi}_7$, then we can calculate all the corresponding diffusion contribution: $\pi_{s,7}=\hat{\phi}_7$; $\pi_{2,7}=\pi_{3,7}=\hat{\phi}_7/4$;
$\pi_{5,7}=\hat{\phi}_7/4$.

\subsection{Total Payoff}
At last, we can get our total payoff policy:
\[ p_i = \lambda \hat{\phi}_i + \mu \pi_i\]
where $0<\alpha\cdot \mu \leq \lambda \leq \frac{1}{2}$ are predefined factors. This is to ensure that the payoff for data contribution is greater than that for diffusion contribution. Otherwise, the workers may not want to offer their data. Another important observation is $p_i\geq 0$ and bounded since $\hat{\phi}_i \geq 0$ and bounded. The detailed proof will be illustrated in Section~\ref{section:prop}.

The total procedure of the mechanism is shown below.

\begin{framed}
	\textbf{Crowdsourcing Diffusion Mechanism (CDM)}
	
	\noindent\rule{\textwidth}{0.35mm}
	
	\noindent\textit{Input}:
	
	A feasible $\theta'\in \mathcal{F}(\theta)$ and parameters $\lambda$, $\mu$, $\alpha$ and $\gamma$ s.t. $0<\alpha\cdot \mu \leq \lambda \leq \frac{1}{2}$, $0<\gamma\leq\frac{1}{2}$ and $0<\alpha\leq1$. 
	
	\begin{enumerate}
		\item Construct the generated social network graph $G(\theta')$.
		\item Run breadth first search on the graph $G(\theta')$ and get the layer sets $L_1$, $L_2$, $\dots$, $L_K$.
		\item For $i$ in $1\dots K$, consider workers in $L_i$:
		
		Compute the layered Shapley value $\hat{\phi}_j$ for each worker $j$ in $L_i$ by Equation (\ref{Eq:layeredShapley}).
		\item Initialize $\pi_{i,j} = 0$ for all $i$, $j \in N$.
		\item For each worker $i\in N$, start from the requester $s$, set $\mathscr{B} = \{s\}$, $\pi_{s,i} = \alpha \hat{\phi}_i$. Until $\mathscr{B} = \{i\}$, do:
		\begin{enumerate}
			\item For each worker $j\in \mathscr{B}$, consider each $k \in {r_j^{c}}' \cap pred(i)$, update the diffusion contribution $\pi_{k,i} \leftarrow \pi_{k,i} + \gamma\cdot \pi_{j,i} /m$, where $m = |{r_j^{c}}' \cap pred(i)|$.
			\item Set $\mathscr{B} = \bigcup {r_j^{c}}' \cap pred(i)$.
		\end{enumerate}
		\item For each worker $i$, calculate $\pi_i = \sum_{j\in N} \pi_{i,j}$.
	\end{enumerate}
	\textit{Output}:
	
	Return total payoff $\lambda \hat{\phi}_i + \mu \pi_i$ for each worker $i$.
\end{framed}

In general, CDM is a centralized data acquisition mechanism. In the beginning, the requester does not know all the workers except her neighbours, so she can only inform her neighbours about the task. Under CDM, the workers informed are incentivized to invite their neighbours to join in the task and to provide all the data they owned to the requester directly. In this way, the requester can know the whole network and collect data as rich as possible without any third-party platforms.

\section{Information Entropy}
\label{Entropy}
Till now, we have qualified the data contribution by the layered Shapley value and presented the mechanism. There is one remaining problem when we apply it to a real-world application, which is how to choose the valuation function $v$. Here we will give a possible approach using information entropy. Information entropy is a function which was first proposed by Shannon~\cite{abelson-et-al:renyi1961measures}. Now it becomes a traditional method to measure the amount of the information of data~\cite{abelson-et-al:Xiao2016Feature,DBLP:journals/entropy/RychtarikovaKMC16}. Information entropy is defined in terms of distributions $\mathbf{q}$ on some space $X$ with finite dimension $|X|$: 

\[ H(\mathbf{q}) = -\mathbb{E}[\log \mathbf{q}] \]

To evaluate a dataset $D$ related to the data acquisition task $T$ by information entropy, we can assume the overall dataset required by the requester $s$ can be classified in $m$ independent target classes, denoted by $\mathcal{X} = \{ \mathcal{X}^1, \dots, \mathcal{X}^m \}$. For each class $\mathcal{X}^j\in \mathcal{X}$, let $X^j$ be its feature space with a predefined finite dimension $|X^j|$. Then for a dataset $D$, every atomic data $d\in D$ can be expressed as a feature vector $d = (x_{d}^1, \dots, x_{d}^m)$, where $x_{d}^j \in X^j$ is the specific feature in class $\mathcal{X}^j$ for $1\leq j\leq m$. For example, if the task is to collect images of nature, let the two target classes be \textit{animals} and \textit{plants}. The space of \textit{animals} is defined as $\{dog,\ cat,\ others\}$ and the space of \textit{plants} is defined as $\{tree,\ flower,\ grass,\ tree\ and\ flower,\ others\}$. Suppose a dataset $D$ has two images $d^1$ and $d^2$, where $d^1$ is an image with a dog beside a tree while $d^2$ is an image with a cat lying on the lawn. Then $d^1 = (dog, tree)$ and $d^2 = (cat, grass)$.

We also need to define a distribution function $Q: \mathcal{D} \mapsto \mathbf{q}$, where $\mathbf{q}=(\mathbf{q}^1,\dots,\mathbf{q}^m)$ is the distribution vector of the dataset $D$. Each $\mathbf{q}^j$ represents the distribution over the feature space $X^j$ of the dataset $D$. In the example above, the distribution of the class \textit{animals} is $\mathbf{q}^1 = (0.5,0.5,0)$ and the distribution of the class \textit{plants} is $\mathbf{q}^2 = (0.5, 0, 0.5, 0, 0)$. Therefore, $Q(D) = ((0.5,0.5,0), (0.5, 0, 0.5, 0, 0))$. 

Now we can use information entropy to evaluate a dataset $D$ using the joint entropy defined on $m$ independent target classes:

\begin{equation}\label{Eq:entropy}
v(D) \triangleq H(Q(D)) = H(\mathbf{q^1},\dots,\mathbf{q^m}) = \sum_{i=1}^{m} H(\mathbf{q^i})
\end{equation}

\begin{lemma}
	Given a dataset $D$ related to task $T$, the valuation of the dataset $D$ by information entropy is bounded.
\end{lemma}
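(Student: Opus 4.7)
The plan is to bound $v(D)=\sum_{i=1}^{m} H(\mathbf{q}^i)$ termwise using the classical upper bound on the entropy of a distribution over a finite set, and then observe that the resulting bound depends only on the fixed structure of the target classes, not on $D$ itself.

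First I would recall two elementary facts about Shannon entropy. (i) For any probability distribution $\mathbf{q}$ on a finite set $X$, we have $0\le H(\mathbf{q})\le \log|X|$, with the upper bound attained by the uniform distribution (this is an immediate consequence of Jensen's inequality or equivalently of the non-negativity of KL divergence). (ii) Joint entropy under independence factorises as a sum, which is exactly the identity already invoked in Equation~(\ref{Eq:entropy}). Since the paper stipulates that each feature space $X^j$ has a predefined finite dimension $|X^j|$, each distribution $\mathbf{q}^j = Q(D)^j$ lives on a finite set and thus satisfies $H(\mathbf{q}^j)\le \log|X^j|<\infty$.

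Next I would combine these bounds. Summing over the $m$ independent target classes gives
\begin{equation*}
0\;\le\; v(D)\;=\;\sum_{j=1}^{m} H(\mathbf{q}^j)\;\le\;\sum_{j=1}^{m}\log|X^j|.
\end{equation*}
The right-hand side is a finite constant determined entirely by the requester's specification of the target classes $\mathcal{X}^1,\dots,\mathcal{X}^m$; it does not depend on the particular dataset $D$. Hence $v(D)$ is bounded for every $D\in\mathcal{D}$, which is exactly the claim. As a sanity check, one can also verify monotonicity is not needed here—only finiteness—so this argument is independent of the other properties required of $v$ in Section~\ref{section:mech}.

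There is essentially no obstacle: the lemma is a direct application of the maximum-entropy bound on a finite alphabet, and the finiteness of $m$ and of each $|X^j|$ is baked into the model. The only subtle point worth a sentence in the write-up is clarifying that the bound $\sum_{j=1}^m \log|X^j|$ is uniform in $D$, so that the boundedness assumption used implicitly in Lemma~\ref{lemma:bound} (which in turn underpins the budget-constraint analysis of CDM) is genuinely satisfied when $v$ is instantiated as joint entropy.
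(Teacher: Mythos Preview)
Your proposal is correct and follows essentially the same approach as the paper: bound each $H(\mathbf{q}^j)$ by $\log|X^j|$ and sum, then note that the resulting bound $\sum_{j=1}^m \log|X^j|$ depends only on the predefined finite feature spaces, not on $D$. Your write-up is somewhat more detailed (explicitly citing the maximum-entropy/Jensen argument and emphasising uniformity in $D$), but the underlying idea and the final inequality are identical to the paper's proof.
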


\begin{proof}
	According to the definition of information entropy, we can calculate the valuation of $D$ as: $$v(D) = \sum_{i=1}^m H(\mathbf{q^i}) \leq \sum_{i=1}^m \log |X^i|$$
	
	Since the dimensions of feature spaces of the task $T$ are predefined and finite, the valuation $v(D)$ is bounded.
\end{proof}

\section{Properties of CDM}
\label{section:prop}
In this section, we will prove that our crowdsourcing diffusion mechanism is incentive compatible, unbounded reward constrained and budget constrained. The mechanism also helps the requester collect more non-redundant data. With these properties, a requester is incentivized to apply our mechanism.

\begin{theorem}\label{datasize}
	The data collected from the crowdsourcing diffusion mechanism is no less than only doing the crowdsourcing among the requester's neighbours.
\end{theorem}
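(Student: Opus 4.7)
The plan is to lean on the incentive compatibility of CDM (established elsewhere in Section~\ref{section:prop}) to show that, under truthful play, the participant set under CDM strictly includes the set of participants under the neighbour-only benchmark, and then invoke the monotonicity of the valuation function $v$.

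First I would argue that once incentive compatibility is in hand, the dominant strategy for every worker $i\in N$ is to report $\theta_i'=(D_i,r_i^c)$, i.e.\ to contribute all her data and to invite every one of her neighbours. I would then induct on the layer depth $l_i$ to conclude that the set of workers who actually participate coincides with the entire reachable component of $s$ in $G$. The base case is the layer $L_1=r_s^c$: the requester directly pushes the task to these workers by construction of the mechanism. For the inductive step, assume every worker at depth $\le k$ has been informed and has, by IC, invited all of her children; any worker $j$ at depth $k+1$ must have at least one parent $i\in r_j^p$ with $l_i\le k$, hence $j$ receives the task via $i$ and joins. Thus the participant set equals $N$.

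Next I would contrast the two datasets. Writing $D_{\mathrm{CDM}}=\bigcup_{i\in N} D_i$ for the dataset collected under CDM and $D_{\mathrm{nbr}}=\bigcup_{i\in r_s^c} D_i$ for the dataset collected when the requester queries only her direct neighbours, the inclusion $r_s^c\subseteq N$ immediately gives $D_{\mathrm{nbr}}\subseteq D_{\mathrm{CDM}}$. Applying the monotonicity assumption on $v$ (stated when the valuation was introduced in Section~\ref{section:mech}) yields $v(D_{\mathrm{nbr}})\le v(D_{\mathrm{CDM}})$, which is the claim.

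The main obstacle I anticipate is logical ordering: the proof invokes incentive compatibility, but if the IC theorem were to depend on Theorem~\ref{datasize}, there would be circularity. I would therefore verify that IC is proved by a direct best-response argument using only the structural properties of the layered Shapley value and Equation~(\ref{Eq:diff}), not on the size of the collected dataset. If that check fails, a fallback is to give a weaker non-strategic argument: regardless of strategy, the set of informed workers under CDM always contains $r_s^c$ (because $s$ contacts them unconditionally), and since the neighbour-only benchmark acquires exactly their data, the inclusion $D_{\mathrm{nbr}}\subseteq D_{\mathrm{CDM}}$ holds before any strategic analysis, with the stronger conclusion recovered once IC is available.
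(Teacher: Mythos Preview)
Your argument is correct, and its core is the same set-theoretic inclusion the paper uses: since $r_s^c\subseteq N$, one has $\bigcup_{i\in r_s^c} D_i\subseteq\bigcup_{i\in N} D_i$. The paper's proof is far more terse: it simply declares $D_N^{\mathrm{CDM}}=\bigcup_{i\in N}D_i\supseteq\bigcup_{i\in r_s^c}D_i$ and concludes, without invoking IC, without the layer-by-layer induction, and without mentioning $v$.

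Two remarks on the differences. First, your concern about logical ordering is well-founded---Theorem~\ref{datasize} precedes Theorem~\ref{IC} in the text---but the paper does not address it; it tacitly assumes that CDM collects $D_i$ from every $i\in N$. Your ``fallback'' is therefore closer in spirit to what the paper actually writes, while your main argument is more rigorous than the paper's own. Second, the final step applying monotonicity of $v$ is unnecessary: the theorem speaks of the data collected, not its valuation, and the paper's proof stops at the dataset inclusion.
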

\begin{proof}
	Traditionally, the participants in crowdsourcing mechanism are those whom the requester can directly communicate with (i.e., the requester is a platform and participants are the registered users of the platform). These users can be viewed as the requester's child neighbours in CDM, denoted as $r_{s}^{c}\in N$, which is a subset of all the workers on the social network. Then we have: $D_{N}^{CDM}= \bigcup_{i\in N}D_i\supset \bigcup_{r_{s}^{c}}D_i$. Therefore, the amount of data collected in CDM is always equal to or greater than that of traditional crowdsourcing.
\end{proof}	

As is proved in Theorem~\ref{datasize}, more non-redundant data will be acquired by CDM, which incentivizes the requester to apply our mechanism.

\begin{theorem}\label{IC}
	The crowdsourcing diffusion mechanism is incentive compatible.
\end{theorem}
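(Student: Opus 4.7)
The plan is to fix an arbitrary worker $i$ with true type $\theta_i = (D_i, r_i^c)$ and an arbitrary feasible report profile of the others $\theta_{-i}'$, and compare truth-telling against an arbitrary deviation $\theta_i'' = (D_i'', r_i^{c\prime\prime})$. Since a worker can only conceal her data and her neighbours (she cannot invent either), we have $D_i'' \subseteq D_i$ and $r_i^{c\prime\prime} \subseteq r_i^c$, and the adjusted profile $\theta_{-i}''$ prescribed in the IC definition simply deletes the reports of any agent no longer reachable from $s$ in $G(\theta_i'', \theta_{-i}')$. I would decompose any such deviation into a sequence of elementary moves of two kinds, namely (A) dropping one child $k$ from $r_i^{c\prime}$ while keeping $D_i'$ fixed, and (B) dropping one atomic datum $d$ from $D_i'$ while keeping $r_i^{c\prime}$ fixed. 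By telescoping, it suffices to prove that each elementary move weakly decreases $p_i = \lambda \hat{\phi}_i + \mu \pi_i$.

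For move (A), the key structural observation is that $i$'s own depth $l_i$ and her layer $L_{l_i}$ as well as every prior layer $L_{l_i - 1}^{*}$ are unaffected by deleting an outgoing edge of $i$, because all shortest paths from $s$ to members of these sets go through parents of $i$'s layer or shallower, never through $i$'s children. Therefore $\hat{\phi}_i$ is unchanged. The diffusion term $\pi_i = \sum_{j \in \mathrm{succ}(i)} \pi_{i,j}$ on the other hand only loses non-negative summands, since any $j$ no longer reachable from $s$ through $i$'s reported children disappears from the sum, and on retained $j$ the path weight through $i$ weakly shrinks. Hence $p_i$ weakly decreases, and the ``hide a neighbour'' direction is dealt with.

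Move (B) is the main obstacle. Monotonicity of $v$ makes every marginal $v(D'_{L_{l_i-1}^{*} \cup S \cup \{i\}}) - v(D'_{L_{l_i-1}^{*} \cup S})$ in Equation~(\ref{Eq:layeredShapley}) weakly smaller when $d$ is withdrawn, so $\hat{\phi}_i$ weakly drops; however $\pi_i$ can strictly increase, because a successor $j$ who also owns $d$ now finds $d$ absent from $D'_{L_{l_j-1}^{*}}$ and so picks up additional marginal contribution, and this increase propagates back to $i$ through $\pi_{i,j}$. The plan is to account for a single withdrawn datum $d$ and to show that the credit for $d$ transfers at most once down the DAG: it moves from the layered Shapley share it commanded inside $L_{l_i}$ to (at most) the shallowest successor layer containing an owner of $d$. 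Calling $\Delta$ the marginal value of $d$ that was credited in $i$'s layer, the decrease in $\lambda \hat{\phi}_i$ is proportional to $\lambda \Delta$ while the potential increase in $\mu \pi_i$, by Equation~(\ref{Eq:diff}) and the path-weight telescoping used in Lemma~\ref{lemma:bound}, is bounded by $\mu \alpha \cdot (\text{geometric factor in }\gamma) \cdot \Delta$. The hypotheses $\alpha \mu \le \lambda$ and $\gamma \le \tfrac12$ then close the inequality $\lambda \Delta \ge \mu \alpha \Delta \cdot \sum_{k\ge 1} \gamma^k$, showing the net change is non-positive. The hard part, and where I would spend most of the effort, is making the ``credit transfer'' accounting rigorous—in particular showing that the increment of $\hat{\phi}_j$ caused by hiding $d$ is really dominated by the corresponding decrement of $\hat{\phi}_i$ with no hidden cross-terms when $d$ appears in several later layers. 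Once this one-datum lemma is in hand, summing over atomic data and over elementary moves yields incentive compatibility.
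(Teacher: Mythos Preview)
Your decomposition into elementary moves (drop a child, drop a datum) is a different organisation from the paper's. The paper instead partitions $D_i$ into three pieces $D_i^f,D_i^i,D_i^b$ according to whether the data is already held in earlier layers, only in $i$'s own layer, or also in later layers, and then for each piece compares ``offer it yourself'' against ``let it slip to a successor.'' The crucial inequality is the same in both approaches: the paper uses the pointwise bound $\mu\pi_{i,j}<\mu\alpha\hat\phi_j\le\lambda\hat\phi_j$ directly (one line, no geometric series), which is exactly your credit-transfer estimate for move~(B) in slightly sharper form. So on move~(B) you and the paper agree, and your explicit flagging of the accounting as the delicate step is well placed.

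There is, however, a real gap in your treatment of move~(A). You claim that after deleting an outgoing edge of $i$ the diffusion term $\pi_i$ ``only loses non-negative summands, since any $j$ no longer reachable \ldots\ disappears from the sum, and on retained $j$ the path weight through $i$ weakly shrinks.'' The second clause controls the coefficient in front of $\hat\phi_j$, but it ignores that $\hat\phi_j$ itself can \emph{increase}: when the dropped subtree is removed from $G(\theta')$, the layered-Shapley credit that those removed workers held for their data migrates to other owners, some of whom may be surviving successors of~$i$ in deeper layers, and removed vertices can also push surviving workers into deeper layers by lengthening shortest paths. In other words, move~(A) is not a pure deletion of non-negative terms; it is itself a credit-transfer event of precisely the kind you analyse carefully for move~(B). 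The repair is to run that same accounting here: the total migrated credit is bounded by the credit the removed workers held, and the diffusion reward $i$ can extract from a unit of credit sitting at any successor is at most $\mu\alpha$ times that unit (via $\pi_{i,j}\le\alpha\hat\phi_j$), so the gain on retained successors cannot exceed the loss on removed ones. Until that comparison is written out, step~(A) does not stand on its own, and your telescoping over elementary moves does not yet close.
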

\begin{proof}
	For each worker $i$, her private data $D_i$ is composed of three parts $(D^f_i, D^i_i, D^b_i)$, where $D^f_i$, $D^i_i$ and $D^b_i$ respectively means the data has been offered by the workers in the previous layers, the data can be only offered by the workers in the same layer as $i$ and the data can be offered by the workers in the succedent layers. Obviously, we can discuss the three parts separately.
	\begin{enumerate}
		\item For $D^f_i$, the worker $i$ will receive zero payoffs in our mechanism. She cannot enlarge this payoff by reporting a ${D^f_i}' \subseteq D^f_i$ or by inviting fewer workers since it has nothing to do with the workers in previous layers.
		\item For $D^i_i$, suppose in the layer where $i$ is, there are $k$ workers (including $i$) own this data where $1\leq k \leq |L_{l_i}|$. Then according to the property of Shapley value, if $i$ truthfully offers $D_i^i$, these $k$ workers will share the payoffs for this data. Therefore, the payoff the worker $i$ will receive is $\lambda v(D^i_i)/k$. If she offers a ${D^i_i}' \subseteq D^i_i$, then the payoff will become to $\lambda v({D^i_i}')/k \leq \lambda v(D^i_i)/k$. If she invites fewer workers, it has nothing to do with her payoffs.
		\item For $D^b_i$, suppose worker $i$ is the predecessor of the first worker $j$ in the succedent layers who also owns this data; otherwise, she will not be rewarded if not offering this data or inviting fewer neighbours. If she reports ${D^b_i}' \subset D^b_i$, she transfers some of her data payoffs to diffusion payoffs. Then the payoff for her diffusion contribution is $\mu\pi_{i,j}<\mu\cdot\alpha\cdot\hat{\phi}_j\leq\lambda\hat{\phi}_j=\lambda v(D_i^b-{D_i^b}')$, where $\lambda v(D_i^b-{D_i^b}')$ is the payoff if $i$ offers this part of data by herself. Hence, she will be likely to offer the whole $D_i^b$ by herself.
	\end{enumerate}
	Therefore, for each worker $i$, truthfully reporting her type is the dominant strategy, i.e., $\theta_i'=\theta_i=(D_i,r_i^c)$.
\end{proof}

Theorem~\ref{IC} shows that all the agents' dominant strategy is to provide all the data they owned and invite all their neighbours for their interests. Then we show that workers' reward is unbounded and the requester's expenditure will be no more than the value of the data acquired, which incentivizes both the requester and the workers to take part in the mechanism.

\begin{theorem}
	The crowdsourcing diffusion mechanism is unbounded reward constrained and budget constrained.
\end{theorem}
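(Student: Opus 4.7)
The plan is to handle the two properties separately, starting with budget constraint since it follows cleanly from machinery already built up in the paper, and then exhibiting an explicit construction for unbounded reward.

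For budget constraint, I would split the total payment as
\[
\sum_{i\in N} p_i(\theta') \;=\; \lambda \sum_{i\in N} \hat{\phi}_i \;+\; \mu \sum_{i\in N}\sum_{j\in N} \pi_{i,j}.
\]
For the first sum, property (1) of the layered Shapley value in Section~\ref{section:mech} gives $\sum_{i\in N}\hat{\phi}_i = v(D_N')$. For the second sum, I would swap the order of summation and apply Lemma~\ref{lemma:bound} to each fixed $j$, obtaining $\sum_{i\in N}\pi_{i,j} \leq \tfrac{\gamma}{1-\gamma}\alpha\hat{\phi}_j$. Using the parameter constraint $\gamma\leq \tfrac{1}{2}$, the factor $\tfrac{\gamma}{1-\gamma}\leq 1$, so the bound collapses to $\alpha\hat{\phi}_j$. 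Combining,
\[
\sum_{i\in N} p_i(\theta') \;\leq\; (\lambda + \mu\alpha)\, v(D_N').
\]
The parameter constraint $\alpha\mu\leq\lambda\leq\tfrac{1}{2}$ now yields $\lambda+\mu\alpha\leq 2\lambda\leq 1$, which delivers BC.

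For unbounded reward constraint, I would exhibit a single family of instances parameterized by $n$ with $d=2$. Take the network to be a chain $s\to 1 \to 2 \to \dots \to n$, so that worker $1$ has exactly two neighbours (the requester and worker $2$). Let each worker $j$ own a single atom $d_j$ that appears nowhere else, and adopt any valuation function $v$ for which $v(\{d_1,\dots,d_j\})-v(\{d_1,\dots,d_{j-1}\})\geq \beta$ for some fixed $\beta>0$ (e.g.\ a cardinality-based $v$, or the entropy construction of Section~\ref{Entropy} with enough diversity). Under truthful reporting, each layer $L_{l_j}$ is the singleton $\{j\}$, so the layered Shapley value of Equation~(\ref{Eq:layeredShapley}) reduces to $\hat{\phi}_j = v(D_{L_{l_j}^*}')-v(D_{L_{l_j-1}^*}')\geq \beta$. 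Because the graph is a chain, the recursive diffusion formula (\ref{Eq:diff}) gives $\pi_{1,j}=\alpha\gamma\,\hat{\phi}_j$ for every $j\geq 2$, so
\[
p_1 \;\geq\; \mu\,\pi_1 \;=\; \mu\alpha\gamma\sum_{j=2}^{n}\hat{\phi}_j \;\geq\; \mu\alpha\gamma(n-1)\beta.
\]
Given any real $a$, choose $n>1+a/(\mu\alpha\gamma\beta)$; then $p_1\geq a$ while worker~$1$ still has only $d=2$ neighbours, establishing URC.

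I expect the main obstacle to be the URC direction, because the parameter constraint $\gamma\leq 1/2$ makes each individual $\pi_{i,j}$ small, so one must be careful to show that a bounded-degree worker can still accumulate unbounded \emph{total} diffusion contribution across many descendants rather than from any single descendant (Lemma~\ref{lemma:bound} rules the latter out). The chain construction sidesteps this by pushing the unboundedness into the number of productive descendants rather than the depth or degree. A minor subtlety is to verify that the chosen valuation function is consistent with the requirements of Section~\ref{section:mech} (monotone, and finite on every finite dataset but not uniformly bounded across instances); the linear and entropy-with-growing-dimension examples both qualify.
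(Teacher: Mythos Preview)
Your budget-constraint argument is essentially the paper's own: split $\sum_i p_i$ into the $\hat\phi$ and $\pi$ parts, use property~(1) of the layered Shapley value on the first, swap the sums and invoke Lemma~\ref{lemma:bound} on the second, and finish with the parameter inequalities $\alpha\mu\le\lambda\le\tfrac12$ and $\gamma\le\tfrac12$. The only cosmetic difference is the order in which you apply the two parameter bounds (you bound $\tfrac{\gamma}{1-\gamma}\le1$ first, whereas the paper first absorbs $\mu\alpha\le\lambda$ and then simplifies to $\tfrac{\lambda}{1-\gamma}$); the resulting inequality chain is the same.

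For URC, the paper's own proof is purely qualitative: it observes that a worker's payoff is monotone in the non-redundant data supplied by her descendants and concludes that the reward ``will always be increasing'' as more descendants join. Your chain construction with $d=2$ and per-worker marginal value at least $\beta$ makes this precise and actually verifies the quantifier structure of the URC definition (for every $a$ there is an instance with $p_1\ge a$). Your computation $\pi_{1,j}=\alpha\gamma\hat\phi_j$ from Equation~(\ref{Eq:diff}) on a chain is correct, and your caveat about the valuation function is apt: the paper only requires $v(D)<\infty$ for each fixed $D$, so a cardinality-type $v$ (or the entropy model with dimension growing across instances) is admissible. In short, your URC argument is not a different route so much as a rigorous instantiation of the paper's informal one.
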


\begin{proof}
	According to the payoff policy, a worker's total payoff is composed of data contribution and diffusion contribution, which is a monotone increasing function of non-redundant data her descendants provided. Then, a worker's reward will always be increasing as long as her neighbours are invited and they also invite their neighbours. Thus, CDM is unbounded reward constrained.
	
	The total dataset collected by our crowdsourcing diffusion mechanism is $D_{N}=\bigcup_{i\in N}D_i$. In Lemma \ref{lemma:bound}, we have that $\sum_{i \in N}\pi_{i,j}$ is bounded. Since the requester's expenses $\mathcal{P}_s$ is the sum of the payoffs, then we have: 
	
	\begin{equation*}
		\begin{aligned}
			\mathcal{P}_s &= \sum_{i\in N}p_i
			=\sum_{i\in N}(\lambda \hat{\phi}_i + \mu \pi_i)\\
			&=\sum_{i\in N}\lambda\hat{\phi}_i + \sum_{j\in N}\mu\sum_{i\in N}\pi_{i,j}\\
			&\leq\sum_{i\in N}\lambda\hat{\phi}_i + \sum_{j\in N}\mu\cdot\frac{\gamma}{1-\gamma}\cdot\alpha\cdot\hat{\phi}_j \\
			&=\left(\lambda+\mu\cdot\alpha\cdot\frac{\gamma}{1-\gamma}\right)\sum_{i\in N}\hat{\phi}_i\\
			&\leq\left(\lambda+\lambda\cdot\frac{\gamma}{1-\gamma}\right)\sum_{i\in N}\hat{\phi}_i\\
			&\leq\frac{\lambda}{1-\gamma}\sum_{i\in N}\hat{\phi}_i\\
			&\leq2\lambda v(D_{N})\\
			&\leq v(D_{N})
		\end{aligned}
	\end{equation*}
	
	Then, we can conclude that the expenses for a data acquisition task $T$ will not exceed $v(D_{N})$, which is bounded. Moreover, the requester can control the expenses by adjusting the factors.
\end{proof}

At last, we show that our mechanism can work on any social networks rather than DAGs. Since CDM is executed layer by layer, we can first run breadth first traversal on the network and then reduce the edges between the workers in the same layer. After reduction, an arbitrary network can be transferred to a DAG with all the properties remained.

\section{Experiments}
\label{section:experiments}
In this section, we conduct experiments to demonstrate the performance difference between CDM and three classic mechanisms. Our experiments shed light on the advantage of data acquisition through social networks for both the data non-redundancy and the requester's expenditure.

In our experiments, we compare the performance of four mechanisms:
\begin{itemize}
	\item \textbf{NonDiff$\_$eps}: The requester only collects data from her neighbours and distributes each of worker a fixed reward $\epsilon$ as a reward.
	\item \textbf{NonDiff$\_$shapley}: The requester only collects data from her neighbours and calculates each worker's reward by the standard Shapley value.
	\item \textbf{Diff$\_$eps}: The requester collects data from all the workers on social networks and distributes each of worker a fixed reward $\epsilon$.
	\item \textbf{CDM}: The requester collects data from all the workers on social networks and calculates each worker's reward by CDM.
\end{itemize}

We set the number of workers as 15, the size of whole data as 100, the maximum amount of data for each worker as 20, and randomly generate social networks and the data each worker owned. We set $\epsilon$ as the mean of the data from all the workers, which can be viewed as the statistical expectation in real-world applications; we set $\alpha=0.1$, $\gamma=0.5$ and $\lambda=\mu=1$ in the setting of CDM. For each graph, we ran the experiments 20 times. All the experiments were performed using python 3.7 on a machine with a 2.9GHz processor and 16GB RAM.

\begin{figure}[h]
	\centering
	\subfigure[]{%
		\label{Data_Diff}%
		\includegraphics[width=0.5\linewidth]{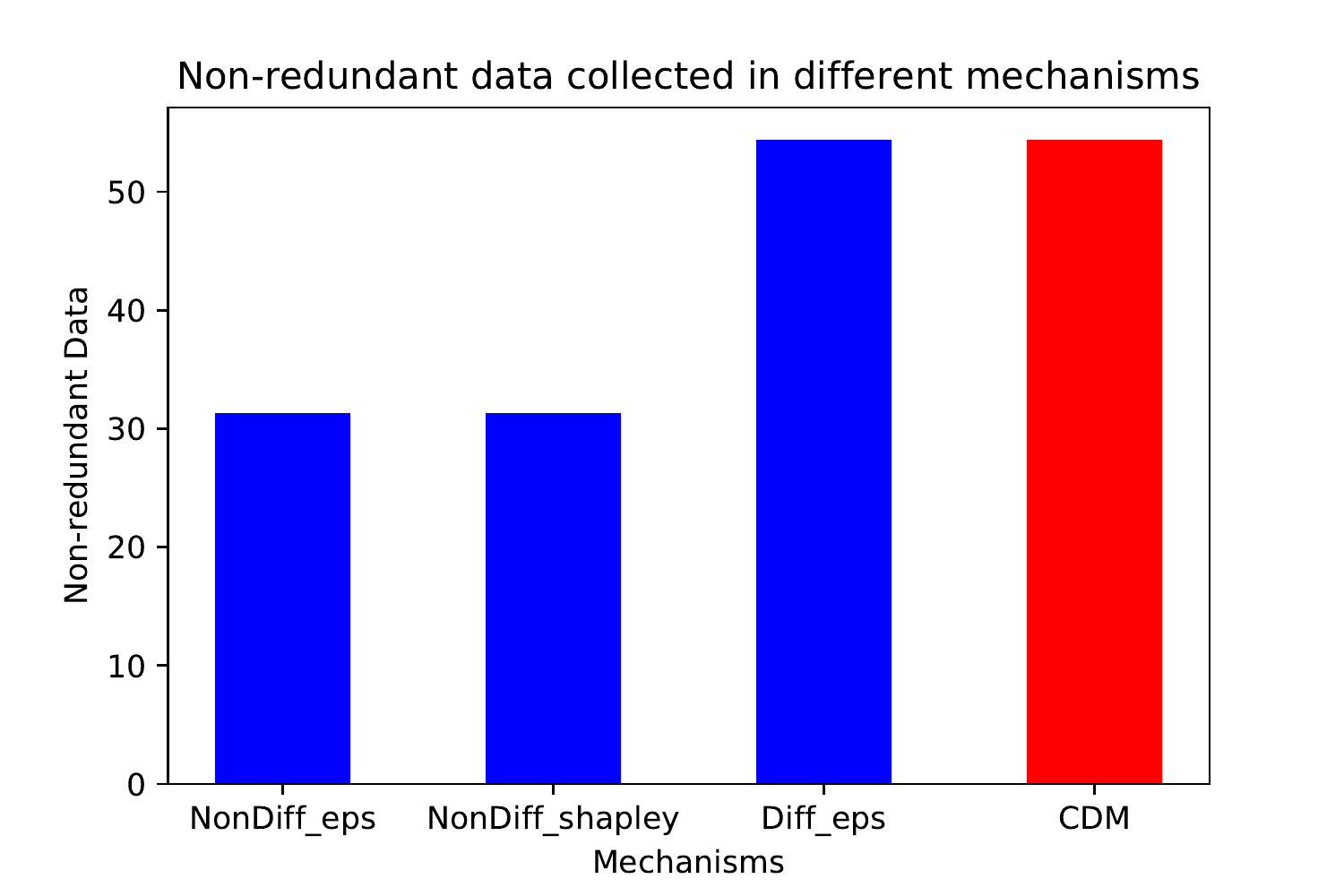}}%
	\subfigure[]{%
		\label{Expenditure}%
		\includegraphics[width=0.5\linewidth]{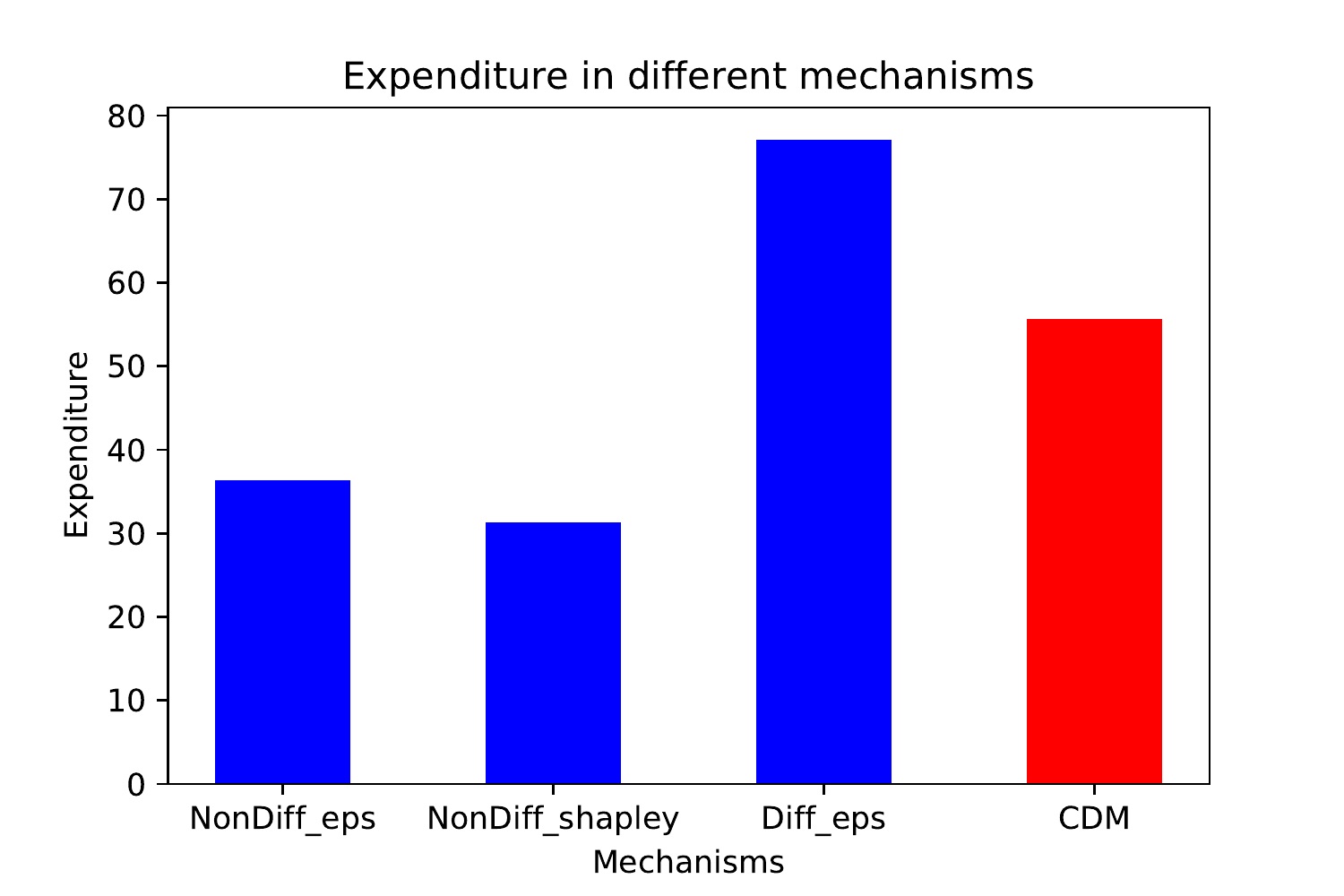}}
	\caption{Given a social network, mechanisms with diffusion collect more data than those without diffusion. For the same amount of data, rewards computed by Shapley value leads to a lower expenditure.}
	\label{fig:experiments}
\end{figure}

In Section~\ref{section:traditional}, we have discussed the limitations for traditional crowdsourcing mechanisms, i.e., violating some theoretical properties. Here we move our attention to the practice of these mechanisms and compare their performance difference. As shown in Figure~\ref{fig:experiments}, experimental results suggest that more workers can be involved in the data acquisition task with diffusion and more data can be collected consequently. Reward distribution with Shapley value can avoid unnecessary expenses to redundant data, which leads to a lower expenditure for the requester, which advances the state of the art for data acquisition tasks.

\section{Conclusion}
\label{section:conc}

In this paper, we have proposed a novel crowdsourcing mechanism via social networks. The mechanism is run by the task requester, and she does not need to pay in advance for the propagation. The prominent contribution of our mechanism is that it incentivizes participants to propagate the task information to their neighbours and to involve more workers in the task. Besides, all workers will also offer as many data as they have. One of the keys to guarantee these properties is that workers close to the requester will have a higher priority to win rewards than their children according to layered Shapley value. We also conducted experiments to further demonstrate the advantages of our mechanism.

Our work has several interesting aspects for future investigation. First of all, the false-name attack is typical in a crowdsourcing system. Hence, designing an advanced mechanism which is false-name proof is a vital successor work. An interesting scene can be considered where workers' action will be affected by their neighbours. Another valuable further work can be generalising our mechanism to other crowdsourcing tasks rather than data acquisition. Although we have shown the impossibility theorem for the setting with cost, it would also be a direction to study the problem after relaxing some assumptions.

%
%


\bibliographystyle{ACM-Reference-Format}  
\bibliography{sample-bibliography}  


\begin{thebibliography}{00}


\ifx \showCODEN    \undefined \def \showCODEN     #1{\unskip}     \fi
\ifx \showDOI      \undefined \def \showDOI       #1{#1}\fi
\ifx \showISBNx    \undefined \def \showISBNx     #1{\unskip}     \fi
\ifx \showISBNxiii \undefined \def \showISBNxiii  #1{\unskip}     \fi
\ifx \showISSN     \undefined \def \showISSN      #1{\unskip}     \fi
\ifx \showLCCN     \undefined \def \showLCCN      #1{\unskip}     \fi
\ifx \shownote     \undefined \def \shownote      #1{#1}          \fi
\ifx \showarticletitle \undefined \def \showarticletitle #1{#1}   \fi
\ifx \showURL      \undefined \def \showURL       {\relax}        \fi
\providecommand\bibfield[2]{#2}
\providecommand\bibinfo[2]{#2}
\providecommand\natexlab[1]{#1}
\providecommand\showeprint[2][]{arXiv:#2}

\bibitem[\protect\citeauthoryear{Chawla, Hartline, and Sivan}{Chawla
  et~al\mbox{.}}{2012}]%
        {abelson-et-al:chawla2015optimal}
\bibfield{author}{\bibinfo{person}{Shuchi Chawla}, \bibinfo{person}{Jason~D.
  Hartline}, {and} \bibinfo{person}{Balasubramanian Sivan}.}
  \bibinfo{year}{2012}\natexlab{}.
\newblock \showarticletitle{Optimal Crowdsourcing Contests}. In
  \bibinfo{booktitle}{{\em Proceedings of the Twenty-third Annual ACM-SIAM
  Symposium on Discrete Algorithms}} {\em (\bibinfo{series}{SODA '12})}.
  \bibinfo{publisher}{Society for Industrial and Applied Mathematics},
  \bibinfo{address}{Philadelphia, PA, USA}, \bibinfo{pages}{856--868}.
\newblock


\bibitem[\protect\citeauthoryear{Deng, Dong, Socher, Li, Li, and Li}{Deng
  et~al\mbox{.}}{2009}]%
        {abelson-et-al:deng2009imagenet}
\bibfield{author}{\bibinfo{person}{Jia Deng}, \bibinfo{person}{Wei Dong},
  \bibinfo{person}{Richard Socher}, \bibinfo{person}{Li{-}Jia Li},
  \bibinfo{person}{Kai Li}, {and} \bibinfo{person}{Fei{-}Fei Li}.}
  \bibinfo{year}{2009}\natexlab{}.
\newblock \showarticletitle{ImageNet: {A} large-scale hierarchical image
  database}. In \bibinfo{booktitle}{{\em Computer Vision and Pattern
  Recognition}}. \bibinfo{publisher}{{IEEE} Computer Society},
  \bibinfo{pages}{248--255}.
\newblock


\bibitem[\protect\citeauthoryear{Emek, Karidi, Tennenholtz, and Zohar}{Emek
  et~al\mbox{.}}{2011}]%
        {DBLP:conf/sigecom/EmekKTZ11}
\bibfield{author}{\bibinfo{person}{Yuval Emek}, \bibinfo{person}{Ron Karidi},
  \bibinfo{person}{Moshe Tennenholtz}, {and} \bibinfo{person}{Aviv Zohar}.}
  \bibinfo{year}{2011}\natexlab{}.
\newblock \showarticletitle{Mechanisms for multi-level marketing}. In
  \bibinfo{booktitle}{{\em Proceedings 12th {ACM} Conference on Electronic
  Commerce (EC-2011), San Jose, CA, USA, June 5-9, 2011}}.
  \bibinfo{pages}{209--218}.
\newblock
\showDOI{%
\url{https://doi.org/10.1145/1993574.1993606}}


\bibitem[\protect\citeauthoryear{Franklin, Kossmann, Kraska, Ramesh, and
  Xin}{Franklin et~al\mbox{.}}{2011}]%
        {abelson-et-al:franklin2011crowddb}
\bibfield{author}{\bibinfo{person}{Michael~J. Franklin},
  \bibinfo{person}{Donald Kossmann}, \bibinfo{person}{Tim Kraska},
  \bibinfo{person}{Sukriti Ramesh}, {and} \bibinfo{person}{Reynold Xin}.}
  \bibinfo{year}{2011}\natexlab{}.
\newblock \showarticletitle{CrowdDB: Answering Queries with Crowdsourcing}. In
  \bibinfo{booktitle}{{\em Proceedings of the 2011 ACM SIGMOD International
  Conference on Management of Data}} {\em (\bibinfo{series}{SIGMOD '11})}.
  \bibinfo{publisher}{ACM}, \bibinfo{address}{New York, NY, USA},
  \bibinfo{pages}{61--72}.
\newblock
\showISBNx{978-1-4503-0661-4}


\bibitem[\protect\citeauthoryear{Howe}{Howe}{2006}]%
        {gottlob:howe2006rise}
\bibfield{author}{\bibinfo{person}{Jeff Howe}.}
  \bibinfo{year}{2006}\natexlab{}.
\newblock \showarticletitle{The rise of crowdsourcing}.
\newblock \bibinfo{journal}{{\em Wired magazine\/}} \bibinfo{volume}{14},
  \bibinfo{number}{6} (\bibinfo{year}{2006}), \bibinfo{pages}{1--4}.
\newblock


\bibitem[\protect\citeauthoryear{LeCun, Bengio, and Hinton}{LeCun
  et~al\mbox{.}}{2015}]%
        {abelson-et-al:lecun2015deep}
\bibfield{author}{\bibinfo{person}{Yann LeCun}, \bibinfo{person}{Yoshua
  Bengio}, {and} \bibinfo{person}{Geoffrey Hinton}.}
  \bibinfo{year}{2015}\natexlab{}.
\newblock \showarticletitle{Deep learning}.
\newblock \bibinfo{journal}{{\em nature\/}} \bibinfo{volume}{521},
  \bibinfo{number}{7553} (\bibinfo{year}{2015}), \bibinfo{pages}{436}.
\newblock


\bibitem[\protect\citeauthoryear{Li, Hao, Zhao, and Zhou}{Li
  et~al\mbox{.}}{2017}]%
        {DBLP:conf/aaai/LiHZZ17}
\bibfield{author}{\bibinfo{person}{Bin Li}, \bibinfo{person}{Dong Hao},
  \bibinfo{person}{Dengji Zhao}, {and} \bibinfo{person}{Tao Zhou}.}
  \bibinfo{year}{2017}\natexlab{}.
\newblock \showarticletitle{Mechanism Design in Social Networks}. In
  \bibinfo{booktitle}{{\em Proceedings of the Thirty-First {AAAI} Conference on
  Artificial Intelligence, February 4-9, 2017, San Francisco, California,
  {USA.}}} \bibinfo{pages}{586--592}.
\newblock
\showURL{%
\url{http://aaai.org/ocs/index.php/AAAI/AAAI17/paper/view/14879}}


\bibitem[\protect\citeauthoryear{Miller, Resnick, and Zeckhauser}{Miller
  et~al\mbox{.}}{2005}]%
        {DBLP:journals/mansci/MillerRZ05}
\bibfield{author}{\bibinfo{person}{Nolan Miller}, \bibinfo{person}{Paul
  Resnick}, {and} \bibinfo{person}{Richard Zeckhauser}.}
  \bibinfo{year}{2005}\natexlab{}.
\newblock \showarticletitle{Eliciting Informative Feedback: The Peer-Prediction
  Method}.
\newblock \bibinfo{journal}{{\em Management Science\/}} \bibinfo{volume}{51},
  \bibinfo{number}{9} (\bibinfo{year}{2005}), \bibinfo{pages}{1359--1373}.
\newblock


\bibitem[\protect\citeauthoryear{Narayanam and Narahari}{Narayanam and
  Narahari}{2011}]%
        {abelson-et-al:Narayanam2010}
\bibfield{author}{\bibinfo{person}{Ramasuri Narayanam} {and}
  \bibinfo{person}{Yadati Narahari}.} \bibinfo{year}{2011}\natexlab{}.
\newblock \showarticletitle{A Shapley Value-Based Approach to Discover
  Influential Nodes in Social Networks}.
\newblock \bibinfo{journal}{{\em {IEEE} Transactions on Automation Science and
  Engineering\/}} \bibinfo{volume}{8}, \bibinfo{number}{1}
  (\bibinfo{year}{2011}), \bibinfo{pages}{130--147}.
\newblock


\bibitem[\protect\citeauthoryear{Naroditskiy, Rahwan, Cebrian, and
  Jennings}{Naroditskiy et~al\mbox{.}}{2012}]%
        {naroditskiy2012verification}
\bibfield{author}{\bibinfo{person}{Victor Naroditskiy}, \bibinfo{person}{Iyad
  Rahwan}, \bibinfo{person}{Manuel Cebrian}, {and} \bibinfo{person}{Nicholas~R
  Jennings}.} \bibinfo{year}{2012}\natexlab{}.
\newblock \showarticletitle{Verification in referral-based crowdsourcing}.
\newblock \bibinfo{journal}{{\em PloS one\/}} \bibinfo{volume}{7},
  \bibinfo{number}{10} (\bibinfo{year}{2012}), \bibinfo{pages}{e45924}.
\newblock


\bibitem[\protect\citeauthoryear{Pickard, Rahwan, Pan, Cebri{\'{a}}n, Crane,
  Madan, and Pentland}{Pickard et~al\mbox{.}}{2010}]%
        {abelson-et-al:pickard2011time}
\bibfield{author}{\bibinfo{person}{Galen Pickard}, \bibinfo{person}{Iyad
  Rahwan}, \bibinfo{person}{Wei Pan}, \bibinfo{person}{Manuel Cebri{\'{a}}n},
  \bibinfo{person}{Riley Crane}, \bibinfo{person}{Anmol Madan}, {and}
  \bibinfo{person}{Alex Pentland}.} \bibinfo{year}{2010}\natexlab{}.
\newblock \showarticletitle{Time Critical Social Mobilization: The {DARPA}
  Network Challenge Winning Strategy}.
\newblock \bibinfo{journal}{{\em Computing Research Repository\/}}
  \bibinfo{volume}{abs/1008.3172} (\bibinfo{year}{2010}).
\newblock


\bibitem[\protect\citeauthoryear{Radanovic, Faltings, and Jurca}{Radanovic
  et~al\mbox{.}}{2016}]%
        {DBLP:journals/tist/RadanovicFJ16}
\bibfield{author}{\bibinfo{person}{Goran Radanovic}, \bibinfo{person}{Boi
  Faltings}, {and} \bibinfo{person}{Radu Jurca}.}
  \bibinfo{year}{2016}\natexlab{}.
\newblock \showarticletitle{Incentives for Effort in Crowdsourcing Using the
  Peer Truth Serum}.
\newblock \bibinfo{journal}{{\em {ACM} {TIST}\/}} \bibinfo{volume}{7},
  \bibinfo{number}{4} (\bibinfo{year}{2016}), \bibinfo{pages}{48:1--48:28}.
\newblock


\bibitem[\protect\citeauthoryear{R{\'e}nyi}{R{\'e}nyi}{1961}]%
        {abelson-et-al:renyi1961measures}
\bibfield{author}{\bibinfo{person}{Alfr{\'e}d R{\'e}nyi}.}
  \bibinfo{year}{1961}\natexlab{}.
\newblock \showarticletitle{On measures of entropy and information}. In
  \bibinfo{booktitle}{{\em Proceedings of the Fourth Berkeley Symposium on
  Mathematical Statistics and Probability, Volume 1: Contributions to the
  Theory of Statistics}}. The Regents of the University of California.
\newblock


\bibitem[\protect\citeauthoryear{Roth}{Roth}{1988}]%
        {roth-alvin:roth1988shapley}
\bibfield{author}{\bibinfo{person}{Alvin~E Roth}.}
  \bibinfo{year}{1988}\natexlab{}.
\newblock \bibinfo{booktitle}{{\em The Shapley value: essays in honor of Lloyd
  S. Shapley}}.
\newblock \bibinfo{publisher}{Cambridge University Press}.
\newblock


\bibitem[\protect\citeauthoryear{Rycht{\'{a}}rikov{\'{a}}, Korbel,
  Mach{\'{a}}cek, C{\'{\i}}sar, Urban, and Stys}{Rycht{\'{a}}rikov{\'{a}}
  et~al\mbox{.}}{2016}]%
        {DBLP:journals/entropy/RychtarikovaKMC16}
\bibfield{author}{\bibinfo{person}{Renata Rycht{\'{a}}rikov{\'{a}}},
  \bibinfo{person}{Jan Korbel}, \bibinfo{person}{Petr Mach{\'{a}}cek},
  \bibinfo{person}{Petr C{\'{\i}}sar}, \bibinfo{person}{Jan Urban}, {and}
  \bibinfo{person}{Dalibor Stys}.} \bibinfo{year}{2016}\natexlab{}.
\newblock \showarticletitle{Point Information Gain and Multidimensional Data
  Analysis}.
\newblock \bibinfo{journal}{{\em Entropy\/}} \bibinfo{volume}{18},
  \bibinfo{number}{10} (\bibinfo{year}{2016}), \bibinfo{pages}{372}.
\newblock


\bibitem[\protect\citeauthoryear{Shen, Feng, and Lopes}{Shen
  et~al\mbox{.}}{2018}]%
        {abelson-et-al:shen2018multi}
\bibfield{author}{\bibinfo{person}{Wen Shen}, \bibinfo{person}{Yang Feng},
  {and} \bibinfo{person}{Cristina~V. Lopes}.} \bibinfo{year}{2018}\natexlab{}.
\newblock \showarticletitle{Multi-Winner Contests for Strategic Diffusion in
  Social Networks}.
\newblock \bibinfo{journal}{{\em Computing Research Repository\/}}
  \bibinfo{volume}{abs/1811.05624} (\bibinfo{year}{2018}).
\newblock


\bibitem[\protect\citeauthoryear{Winter}{Winter}{1989}]%
        {winter1989value}
\bibfield{author}{\bibinfo{person}{Eyal Winter}.}
  \bibinfo{year}{1989}\natexlab{}.
\newblock \showarticletitle{A value for cooperative games with levels structure
  of cooperation}.
\newblock \bibinfo{journal}{{\em International Journal of Game Theory\/}}
  \bibinfo{volume}{18}, \bibinfo{number}{2} (\bibinfo{year}{1989}),
  \bibinfo{pages}{227--240}.
\newblock


\bibitem[\protect\citeauthoryear{Zhang, Mei, Chen, and Li}{Zhang
  et~al\mbox{.}}{2016}]%
        {abelson-et-al:Xiao2016Feature}
\bibfield{author}{\bibinfo{person}{Xiao Zhang}, \bibinfo{person}{Changlin Mei},
  \bibinfo{person}{Degang Chen}, {and} \bibinfo{person}{Jinhai Li}.}
  \bibinfo{year}{2016}\natexlab{}.
\newblock \showarticletitle{Feature selection in mixed data: {A} method using a
  novel fuzzy rough set-based information entropy}.
\newblock \bibinfo{journal}{{\em Pattern Recognition\/}}  \bibinfo{volume}{56}
  (\bibinfo{year}{2016}), \bibinfo{pages}{1--15}.
\newblock


\bibitem[\protect\citeauthoryear{Zhao, Li, Xu, Hao, and Jennings}{Zhao
  et~al\mbox{.}}{2018}]%
        {DBLP:conf/atal/ZhaoLXHJ18}
\bibfield{author}{\bibinfo{person}{Dengji Zhao}, \bibinfo{person}{Bin Li},
  \bibinfo{person}{Junping Xu}, \bibinfo{person}{Dong Hao}, {and}
  \bibinfo{person}{Nicholas~R. Jennings}.} \bibinfo{year}{2018}\natexlab{}.
\newblock \showarticletitle{Selling Multiple Items via Social Networks}. In
  \bibinfo{booktitle}{{\em Proceedings of the 17th International Conference on
  Autonomous Agents and MultiAgent Systems, {AAMAS} 2018, Stockholm, Sweden,
  July 10-15, 2018}}. \bibinfo{pages}{68--76}.
\newblock
\showURL{%
\url{http://dl.acm.org/citation.cfm?id=3237400}}


\bibitem[\protect\citeauthoryear{Zhou, Liu, Platt, Meek, and Shah}{Zhou
  et~al\mbox{.}}{2015}]%
        {abelson-et-al:zhou2015regularized}
\bibfield{author}{\bibinfo{person}{Dengyong Zhou}, \bibinfo{person}{Qiang Liu},
  \bibinfo{person}{John~C. Platt}, \bibinfo{person}{Christopher Meek}, {and}
  \bibinfo{person}{Nihar~B. Shah}.} \bibinfo{year}{2015}\natexlab{}.
\newblock \showarticletitle{Regularized Minimax Conditional Entropy for
  Crowdsourcing}.
\newblock \bibinfo{journal}{{\em Computing Research Repository\/}}
  \bibinfo{volume}{abs/1503.07240} (\bibinfo{year}{2015}).
\newblock


\end{thebibliography}

\end{document}